\newcommand{\nop}[1]{}
\newcommand{\OPT}{\textsf{OPT}}
\newcommand{\OPTc}{\textsf{OPT}_\textsf{cut}}
\newcommand{\SDPc}{\textsf{SDP}_\textsf{cut}}
\DeclareMathOperator*{\argmin}{arg\,min}
\theoremstyle{definition}
\newtheorem{theorem}     {Theorem}
\newtheorem{lemma}       {Lemma}
\newtheorem{corollary}   {Corollary}
\newtheorem{remark}      {Remark}
\newtheorem{fact}        {Fact}
\title{Additive Approximation Algorithms for Modularity Maximization}
\author{Yasushi Kawase\thanks{email: kawase.y.ab@m.titech.ac.jp}}
\author{Tomomi Matsui\thanks{email: matsui.t.af@m.titech.ac.jp}}
\author{Atsushi Miyauchi\thanks{email: miyauchi.a.aa@m.titech.ac.jp (Corresponding author)}}
\affil{\small{\textit{Graduate School of Decision Science and Technology, Tokyo Institute of Technology, \\
\textit{Ookayama 2-12-1, Meguro-ku, Tokyo 152-8552, Japan}}}}
\date{\today}
\begin{document}
\maketitle

\begin{abstract}
The modularity is a quality function in community detection, which was introduced by Newman and Girvan~(2004). 
Community detection in graphs is now often conducted through modularity maximization: 
given an undirected graph $G=(V,E)$, we are asked to find a partition $\mathcal{C}$ of $V$ that maximizes the modularity. 
Although numerous algorithms have been developed to date, most of them have no theoretical approximation guarantee. 
Recently, to overcome this issue, the design of modularity maximization algorithms with provable approximation guarantees 
has attracted significant attention in the computer science community.

In this study, we further investigate the approximability of modularity maximization. 
More specifically, we propose a polynomial-time 
$\left(\cos\left(\frac{3-\sqrt{5}}{4}\pi\right)-\frac{1+\sqrt{5}}{8}\right)$-additive approximation algorithm 
for the modularity maximization problem. 
Note here that $\cos\left(\frac{3-\sqrt{5}}{4}\pi\right)-\frac{1+\sqrt{5}}{8} < 0.42084$ holds. 
This improves the current best additive approximation error of 0.4672, which was recently provided by Dinh, Li, and Thai (2015). 
Interestingly, our analysis also demonstrates that the proposed algorithm obtains a nearly-optimal solution for any instance with a very high modularity value. 
Moreover, we propose a polynomial-time 0.16598-additive approximation algorithm for the maximum modularity cut problem. 
It should be noted that this is the first non-trivial approximability result for the problem. 
Finally, we demonstrate that our approximation algorithm can be extended to some related problems. 
\end{abstract}

\section{Introduction}
Identifying community structure is a fundamental primitive in graph mining~\cite{Fo10}. 
Roughly speaking, a \textit{community} (also referred to as a \textit{cluster} or \textit{module}) in a graph 
is a subset of vertices densely connected with each other, but sparsely connected with the vertices outside the subset. 
Community detection in graphs is a powerful way to discover components that have some special roles or possess important functions. 
For example, consider the graph representing the World Wide Web, 
where vertices correspond to web pages and edges represent hyperlinks between pages. 
Communities in this graph are likely to be the sets of web pages dealing with the same or similar topics, 
or sometimes link spam~\cite{Gi05}. 
As another example, consider the protein--protein interaction graphs, 
where vertices correspond to proteins within a cell and edges represent interactions between proteins. 
Communities in this graph are likely to be the sets of proteins that have the same or similar functions within the cell~\cite{SpMi03}.

To date, numerous community detection algorithms have been developed, 
most of which are designed to maximize a \textit{quality function}. 
Quality functions in community detection return some value 
that represents the \textit{community-degree} for a given partition of the set of vertices. 
The best known and widely used quality function is the \textit{modularity}, 
which was introduced by Newman and Girvan~\cite{NeGi04}.
Let $G=(V,E)$ be an undirected graph consisting of $n=|V|$ vertices and $m=|E|$ edges. 
The modularity, a quality function for a partition $\mathcal{C}=\{C_1,\dots,C_k\}$ of $V$ 
(i.e., $\bigcup_{i=1}^k C_i=V$ and $C_i\cap C_j=\emptyset$ for $i\neq j$), can be written as 
\begin{align*}
Q(\mathcal{C})=\sum_{C\in \mathcal{C}}\left(\frac{m_C}{m} - \left(\frac{D_C}{2m}\right)^2\right), 
\end{align*}
where $m_C$ represents the number of edges whose endpoints are both in $C$, 
and $D_C$ represents the sum of degrees of the vertices in $C$. 
The modularity represents the sum, over all communities, of the fraction of the number of edges within communities 
minus the expected fraction of such edges assuming that they are placed at random with the same degree distribution. 
The modularity is based on the idea that the greater the above surplus, the more community-like the partition $\mathcal{C}$. 

Although the modularity is known to have some drawbacks 
(e.g., the \textit{resolution limit}~\cite{FoBa07} and \textit{degeneracies}~\cite{GoMoCl10}), 
community detection is now often conducted through modularity maximization: 
given an undirected graph $G=(V, E)$, we are asked to find a partition $\mathcal{C}$ of $V$ that maximizes the modularity. 
Note that the modularity maximization problem has no restriction on the number of communities in the output partition; 
thus, the algorithms are allowed to specify the best number of communities by themselves. 
Brandes et al.~\cite{Bretal08} proved that modularity maximization is NP-hard. 
This implies that unless $\text{P}=\text{NP}$, 
there exists no polynomial-time algorithm that finds a partition with maximum modularity for any instance. 
A wide variety of applications (and this hardness result) have promoted the development of modularity maximization heuristics. 
In fact, there are numerous algorithms based on various techniques 
such as greedy procedure~\cite{Bletal08,ClNeMo04,NeGi04}, simulated annealing~\cite{GuAm05,MaDo05}, 
spectral optimization~\cite{Ne06-1,RiMuPo09}, extremal optimization~\cite{DuAr05}, 
and mathematical programming~\cite{AgKe08,CaHaLi11,MiMi13,CaCoHa14}.
Although some of them are known to perform well in practice, they have no theoretical approximation guarantee at all. 

Recently, to overcome this issue, the design of modularity maximization algorithms with provable approximation guarantees 
has attracted significant attention in the computer science community. 
DasGupta and Desai~\cite{DaDe13} designed a polynomial-time $\epsilon$-additive approximation algorithm\footnote{
A feasible solution is  \textit{$\alpha$-additive approximate} if its objective value is at least the optimal value minus $\alpha$. 
An algorithm is called an \textit{$\alpha$-additive approximation algorithm} 
if it returns an $\alpha$-additive approximate solution for any instance.
For an $\alpha$-additive approximation algorithm, $\alpha$ is referred to as an \textit{additive approximation error} of the algorithm.} 
for dense graphs (i.e., graphs with $m=\Omega(n^2)$) using an algorithmic version of the regularity lemma~\cite{FrKa96}, 
where $\epsilon>0$ is an arbitrary constant. 
Moreover, Dinh, Li, and Thai~\cite{DiLiTh15} very recently developed a polynomial-time 0.4672-additive approximation algorithm. 
This is the first polynomial-time additive approximation algorithm with a non-trivial approximation guarantee
for modularity maximization (that is applicable to any instance).\footnote{
A 1-additive approximation algorithm is trivial because $Q(\{V\})=0$ and $Q(\mathcal{C})< 1$ for any partition $\mathcal{C}$.} 
Note that, to our knowledge, this is the current best additive approximation error. 
Their algorithm is based on the semidefinite programming (SDP) relaxation and the hyperplane separation technique.

\subsection{Our Contribution}
In this study, we further investigate the approximability of modularity maximization. 
Our contribution can be summarized as follows: 

\begin{enumerate}
\item We propose a polynomial-time $\left(\cos\left(\frac{3-\sqrt{5}}{4}\pi\right)-\frac{1+\sqrt{5}}{8}\right)$-additive approximation algorithm 
for the modularity maximization problem. 
Note here that $\cos\left(\frac{3-\sqrt{5}}{4}\pi\right)-\frac{1+\sqrt{5}}{8} < 0.42084$ holds; 
thus, this improves the current best additive approximation error of 0.4672, 
which was recently provided by Dinh, Li, and Thai~\cite{DiLiTh15}. 
Interestingly, our analysis also demonstrates that the proposed algorithm obtains a nearly-optimal solution for any instance with a very high modularity value. 
\item We propose a polynomial-time 0.16598-additive approximation algorithm for the maximum modularity cut problem. 
It should be noted that this is the first non-trivial approximability result for the problem. 
\item We demonstrate that our additive approximation algorithm for the modularity maximization problem can be extended to some related problems. 
\end{enumerate}

\paragraph{First result.}
Let us describe our first result in details. 
Our additive approximation algorithm is also based on the SDP relaxation and the hyperplane separation technique. 
However, as described below, our algorithm is essentially different from the one proposed by Dinh, Li, and Thai~\cite{DiLiTh15}. 

The algorithm by Dinh, Li, and Thai~\cite{DiLiTh15} reduces 
the SDP relaxation for the modularity maximization problem to the one for \textsc{MaxAgree} problem arising in correlation clustering 
(e.g., see \cite{BaBlCh04} or \cite{ChGuWi05}) by adding an appropriate constant to the objective function. 
Then, the algorithm adopts the SDP-based $0.7664$-approximation algorithm\footnote{
A feasible solution is \textit{$\alpha$-approximate} if its objective value is at least $\alpha$ times the optimal value. 
An algorithm is called an \textit{$\alpha$-approximation algorithm} if it returns an $\alpha$-approximate solution for any instance.
For an $\alpha$-approximation algorithm, $\alpha$ is referred to as an \textit{approximation ratio} of the algorithm.}  
for \textsc{MaxAgree} problem, 
which was developed by Charikar, Guruswami, and Wirth~\cite{ChGuWi05}. 
In fact, the additive approximation error of $0.4672$ is just derived from $2(1-\kappa)$, 
where $\kappa$ represents the approximation ratio of the SDP-based algorithm for \textsc{MaxAgree} problem (i.e., $\kappa=0.7664$). 
It should be noted that the analysis of the SDP-based algorithm for \textsc{MaxAgree} problem~\cite{ChGuWi05} 
aims at multiplicative approximation rather than additive one. 
As a result, the analysis by Dinh, Li, and Thai~\cite{DiLiTh15} has caused a gap in terms of additive approximation. 

In contrast, our algorithm does not depend on such a reduction. 
In fact, our algorithm just solves the SDP relaxation for the modularity maximization problem without any transformation. 
Moreover, our algorithm employs a hyperplane separation procedure different from the one used in their algorithm. 
The algorithm by Dinh, Li, and Thai~\cite{DiLiTh15} generates 2 and 3 random hyperplanes to obtain feasible solutions, and then returns the better one. 
On the other hand, our algorithm chooses an appropriate number of hyperplanes using the information of the optimal solution to the SDP relaxation 
so that the lower bound on the expected modularity value is maximized. 
Note here that this modification does not improve the worst-case performance of the algorithm by Dinh, Li, and Thai~\cite{DiLiTh15}. 
In fact, as shown in our analysis, their algorithm already has the additive approximation error of 
$\cos\left(\frac{3-\sqrt{5}}{4}\pi\right)-\frac{1+\sqrt{5}}{8}$. 
However, we demonstrate that the proposed algorithm has a much better lower bound on the expected modularity value for many instances. 
In particular, for any instance with optimal value close to 1 (a trivial upper bound), our algorithm obtains a nearly-optimal solution. 
At the end of our analysis, we summarize a lower bound on the expected modularity value with respect to the optimal value of a given instance. 


\paragraph{Second result.}
Here we describe our second result in details. 
The modularity maximization problem has no restriction on the number of clusters in the output partition. 
On the other hand, there also exist a number of problem variants with such a restriction. 
The maximum modularity cut problem is a typical one, where given an undirected graph $G=(V,E)$, 
we are asked to find a partition $\mathcal{C}$ of $V$ consisting of at most two components 
(i.e., a bipartition $\mathcal{C}$ of $V$) that maximizes the modularity. 
This problem appears in many contexts in community detection. 
For example, a few hierarchical divisive heuristics for the modularity maximization problem repeatedly solve this problem 
either exactly~\cite{CaCoHa14,CaHaLi11} or heuristically~\cite{AgKe08}, to obtain a partition $\mathcal{C}$ of $V$. 
Brandes et al.~\cite{Bretal08} proved that the maximum modularity cut problem is NP-hard (even on dense graphs). 
More recently, DasGupta and Desai~\cite{DaDe13} showed that the problem is NP-hard even on $d$-regular graphs with any fixed $d\geq 9$. 
However, to our knowledge, there exists no approximability result for the problem. 

Our additive approximation algorithm adopts the SDP relaxation and the hyperplane separation technique, 
which is identical to the subroutine of the hierarchical divisive heuristic proposed by Agarwal and Kempe~\cite{AgKe08}. 
Specifically, our algorithm first solves the SDP relaxation for the maximum modularity cut problem (rather than the modularity maximization problem), 
and then generates a random hyperplane to obtain a feasible solution for the problem. 
Although the computational experiments by Agarwal and Kempe~\cite{AgKe08} demonstrate that 
their hierarchical divisive heuristic maximizes the modularity quite well in practice, 
the approximation guarantee of the subroutine in terms of the maximum modularity cut was not analyzed. 
Our analysis shows that the proposed algorithm is a $0.16598$-additive approximation algorithm for the maximum modularity cut problem. 
At the end of our analysis, we again present a lower bound on the expected modularity value with respect to the optimal value of a given instance. 
This reveals that for any instance with optimal value close to $1/2$ (a trivial upper bound in the case of bipartition), 
our algorithm obtains a nearly-optimal solution.

\paragraph{Third result.}
Finally, we describe our third result. 
In addition to the above problem variants with a bounded number of clusters, 
there are many other variations of modularity maximization~\cite{Fo10}. 
We demonstrate that our additive approximation algorithm for the modularity maximization problem can be extended to the following three problems: 
the weighted modularity maximization problem~\cite{Ne04}, the directed modularity maximization problem~\cite{LeNe08}, 
and Barber's bipartite modularity maximization problem~\cite{Ba07}.


\subsection{Related Work}
\paragraph{SDP relaxation.}
The seminal work by Goemans and Williamson~\cite{GoWi95} has opened the door to the design of approximation algorithms 
using the SDP relaxation and the hyperplane separation technique. 
To date, this approach has succeeded in developing approximation algorithms for various NP-hard problems~\cite{WiSh11}. 
As mentioned above, Agarwal and Kempe~\cite{AgKe08} introduced the SDP relaxation for the maximum modularity cut problem. 
For the original modularity maximization problem, the SDP relaxation was recently used by Dinh, Li, and Thai~\cite{DiLiTh15}.

\paragraph{Multiplicative approximation algorithms.}
As mentioned above, the design of approximation algorithms for modularity maximization has recently become 
an active research area in the computer science community. 
Indeed, in addition to the additive approximation algorithms described above, 
there also exist multiplicative approximation algorithms. 

DasGupta and Desai~\cite{DaDe13} designed an $\Omega(1/\log d)$-approximation algorithm for $d$-regular graphs with $d\leq \frac{n}{2\log n}$. 
Moreover, they developed an approximation algorithm for the weighted modularity maximization problem. 
The approximation ratio is logarithmic in the maximum weighted degree of edge-weighted graphs 
(where the edge-weights are normalized so that the sum of weights are equal to the number of edges). 
This algorithm requires that the maximum weighted degree is less than about $\frac{\sqrt[5]{n}}{\log n}$. 
These algorithms are not derived directly from logarithmic approximation algorithms for quadratic forms 
(e.g., see \cite{Aletal05} or \cite{ChWi04}) 
because the quadratic form for modularity maximization has negative diagonal entries. 
To overcome this difficulty, they designed a more specialized algorithm using a graph decomposition technique. 

Dinh and Thai~\cite{DiTh13} developed multiplicative approximation algorithms for the modularity maximization problem on scale-free graphs with a prescribed degree sequence. 
In their graphs, the number of vertices with degree $d$ is fixed to some value proportional to $d^{-\gamma}$, where $-\gamma$ is called the power-law exponent. 
For such scale-free graphs with $\gamma>2$, they developed a polynomial-time 
$\left(\frac{\zeta(\gamma)}{\zeta(\gamma-1)}-\epsilon \right)$-approximation algorithm for an arbitrarily small $\epsilon>0$, 
where $\zeta(\gamma)=\sum_{i=1}^\infty \frac{1}{i^\gamma}$ is the Riemann zeta function. 
For graphs with $1<\gamma \leq 2$, they developed a polynomial-time $\Omega(1/\log n)$-approximation algorithm 
using the logarithmic approximation algorithm for quadratic forms~\cite{ChWi04}.

\paragraph{Inapproximability results.} 
There are some inapproximability results for the modularity maximization problem. 
DasGupta and Desai~\cite{DaDe13} showed that it is NP-hard to obtain a $(1-\epsilon)$-approximate solution 
for some constant $\epsilon>0$ (even for complements of 3-regular graphs). 
More recently, Dinh, Li, and Thai~\cite{DiLiTh15} proved a much stronger statement, that is, 
there exists no polynomial-time $(1-\epsilon)$-approximation algorithm for \textit{any} $\epsilon>0$, unless $\text{P}=\text{NP}$. 
It should be noted that these results are on multiplicative approximation rather than additive one.  
In fact, there exist no inapproximability results in terms of additive approximation for modularity maximization.

\subsection{Preliminaries}
Here we introduce definitions and notation used in this paper. 
Let $G=(V,E)$ be an undirected graph consisting of $n=|V|$ vertices and $m=|E|$ edges. 
Let $P=V\times V$. 
By simple calculation, as mentioned in Brandes et al.~\cite{Bretal08}, the modularity can be rewritten as 
\begin{align*}
Q(\mathcal{C}) = \frac{1}{2m}\sum_{(i,j)\in P}\left(A_{ij} - \frac{d_id_j}{2m}\right)\delta(\mathcal{C}(i), \mathcal{C}(j)),
\end{align*}
where $A_{ij}$ is the $(i,j)$ component of the adjacency matrix $A$ of $G$, $d_i$ is the degree of $i\in V$, 
$\mathcal{C}(i)$ is the (unique) community to which $i\in V$ belongs, 
and $\delta$ represents the Kronecker symbol equal to $1$ if two arguments are identical and $0$ otherwise. 
This form is useful to write mathematical programming formulations for modularity maximization. 
For convenience, we define  
\begin{align*}
q_{ij}=\frac{A_{ij}}{2m} - \frac{d_id_j}{4m^2}\quad \text{for each } (i,j)\in P. 
\end{align*}
We can divide the set $P$ into the following two disjoint subsets: 
\begin{align*}
P_{\geq 0}= \{(i,j)\in P\mid q_{ij}\geq 0\} \quad \text{and} \quad P_{<0}= \{(i,j)\in P\mid q_{ij}<0\}.
\end{align*}
Clearly, we have 
\begin{align*}
\sum_{(i,j)\in P_{\geq 0}}q_{ij} + \sum_{(i,j)\in P_{<0}}q_{ij} =\sum_{(i,j)\in P}q_{ij}= 0,
\end{align*} 
and thus 
\begin{align*}
\sum_{(i,j)\in P_{\geq 0}}q_{ij}=\sum_{(i,j)\in P_{<0}}-q_{ij}.
\end{align*} 
We denote this value by $q$, i.e., 
\begin{align*}
q=\sum_{(i,j)\in P_{\geq 0}}q_{ij}. 
\end{align*}
Note that for any instance, we have $q<1$.

\subsection{Paper Organization}
This paper is structured as follows. 
In Section~\ref{sec:algorithm}, we revisit the SDP relaxation for the modularity maximization problem, and then describe an outline of our algorithm.
In Section~\ref{sec:analysis}, the approximation guarantee of the proposed algorithm is carefully analyzed. 
In Section~\ref{sec:cut}, we propose an additive approximation algorithm for the maximum modularity cut problem. 
We extend our additive approximation algorithm to some related problems in Section~\ref{sec:related}. 
Finally, conclusions and future work are presented in Section~\ref{sec:conclusions}.

\section{Algorithm}\label{sec:algorithm}
In this section, we revisit the SDP relaxation for the modularity maximization problem, and then describe an outline of our algorithm. 
The modularity maximization problem can be formulated as follows:
\begin{alignat*}{3}
&\text{Maximize}   &\ \  &\sum_{(i,j)\in P}q_{ij}\, (\bm{y}_i\cdot \bm{y}_j)\\
&\text{subject to} &     &\bm{y}_i\in \{\bm{e}_1,\dots,\bm{e}_n\} \quad \quad (\forall i\in V),
\end{alignat*}
where $\bm{e}_k$ $(1\leq k\leq n)$ represents the vector that has 1 in the $k$th coordinate and 0 elsewhere. 
We denote by $\OPT$ the optimal value of this original problem. 
Note that for any instance, we have $\OPT\in [0,1)$.
We introduce the following semidefinite relaxation problem: 
\begin{alignat*}{4}
&\textsf{SDP}:&\ \  &\text{Maximize}    &\ \ &\sum_{(i,j)\in P}q_{ij}x_{ij}\\
&             &      &\text{subject to} &    &x_{ii}= 1                       &  &(\forall i\in V) ,\\
&             &      &                  &    &x_{ij}\geq 0                    &  &(\forall i,j\in V),\\
&             &      &                  &    &X=(x_{ij}) \in \mathcal{S}^n_+,
\end{alignat*}
where $\mathcal{S}^n_+$ represents the cone of $n\times n$ symmetric positive semidefinite matrices. 
It is easy to see that every feasible solution $X=(x_{ij})$ of \textsf{SDP} satisfies $x_{ij}\leq 1$ for any $(i,j)\in P$. 
Although the algorithm by Dinh, Li, and Thai~\cite{DiLiTh15} reduces \textsf{SDP} 
to the one for \textsc{MaxAgree} problem by adding an appropriate constant to the objective function, 
our algorithm just solves \textsf{SDP} without any transformation. 
Let $X^*=(x^*_{ij})$ be an optimal solution to \textsf{SDP}, 
which can be computed (with an arbitrarily small error) in time polynomial in $n$ and $m$. 
Using the optimal solution $X^*$, we define the following two values: 
\begin{align*}
z^*_+= \frac{1}{q} \sum_{(i,j)\in P_{\geq 0}} q_{ij} x^*_{ij}
\quad \mbox{and} \quad 
z^*_-= \frac{1}{q}\sum_{(i,j)\in P_{<0}} q_{ij} x^*_{ij}, 
\end{align*}
both of which are useful in the analysis of the approximation guarantee of our algorithm. 
Clearly, we have $0\leq z^*_+\leq 1$ and $-1\leq z^*_-\leq 0$. 

We apply the hyperplane separation technique to obtain a feasible solution of the modularity maximization problem.
Specifically, we consider the following general procedure: generate $k$ random hyperplanes to separate the vectors 
corresponding to the optimal solution $X^*$, and then obtain a partition $\mathcal{C}_k=\{C_1,\dots, C_{2^k}\}$ of $V$. 
For reference, the procedure is described in Algorithm~\ref{alg:hyperplane}. 
Note here that at this time, we have not yet mentioned how to determine the number $k$ of hyperplanes we generate. 
As revealed in our analysis, we can choose an appropriate number of hyperplanes using the value of $z^*_+$ 
so that the lower bound on the expected modularity value of the output of \textsf{Hyperplane}($k$) is maximized. 
\begin{algorithm}[t]
\caption{\textsf{Hyperplane}($k$)}
\label{alg:hyperplane}
\begin{algorithmic}[1]
\REQUIRE Graph $G=(V,E)$
\ENSURE Partition $\mathcal{C}$ of $V$
\STATE Obtain an optimal solution $X^*=(x^*_{ij})$ to \textsf{SDP}
\STATE Generate $k$ random hyperplanes and obtain a partition $\mathcal{C}_k=\{C_1,\dots, C_{2^k}\}$ of $V$
\STATE \textbf{return} $\mathcal{C}_k$
\end{algorithmic}
\end{algorithm}

\section{Analysis}\label{sec:analysis}
In this section, we first analyze an additive approximation error of \textsf{Hyperplane}($k$) for each positive integer $k\in \mathbb{Z}_{>0}$.  
Then, we provide an appropriate number $k^*\in \mathbb{Z}_{>0}$ of hyperplanes we generate, which completes the design of our algorithm. 
Finally, we present a lower bound on the expected modularity value of the output of \textsf{Hyperplane}($k^*$) with respect to the value of \OPT. 

When $k$ random hyperplanes are generated independently, 
the probability that two vertices $i,j\in V$ are in the same cluster is given by 
\begin{align*}
\left(1-\frac{\arccos(x^*_{ij})}{\pi}\right)^k,
\end{align*}
as mentioned in previous works (e.g., see \cite{ChGuWi05} or \cite{GoWi95}).
For simplicity, we define the function 
\begin{align*}
f_k (x) = \left(1-\frac{\arccos(x)}{\pi}\right)^k
\end{align*}
for $x\in [0,1]$. 
Here we present the lower convex envelope of each of  $f_k(x)$ and $-f_k(x)$. 
\begin{lemma}\label{lem:lce}
For any positive integer $k$, the lower convex envelope of $f_k(x)$ is given by $f_k(x)$ itself, 
and the lower convex envelope of $-f_k(x)$ is given by the linear function $h_k(x)=-1/2^k+(1/2^k-1)x $ for $x\in [0,1]$. 
\end{lemma}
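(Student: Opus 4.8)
The plan is to reduce both claims to the single fact that $f_k$ is convex on $[0,1]$, after which standard properties of convex envelopes finish the job. To prove convexity I would write $f_k=\phi\circ g$ with $\phi(t)=t^k$ and $g(x)=1-\arccos(x)/\pi$, noting that $g$ maps $[0,1]$ into $[1/2,1]\subseteq[0,\infty)$. A direct differentiation gives $g'(x)=\frac{1}{\pi\sqrt{1-x^2}}>0$ and $g''(x)=\frac{x}{\pi(1-x^2)^{3/2}}\ge 0$ for $x\in[0,1)$, so $g$ is nondecreasing and convex on $[0,1)$, hence (by continuity at the right endpoint) on $[0,1]$. Since $\phi$ is nondecreasing and convex on $[0,\infty)$, the composition $f_k=\phi\circ g$ is convex on $[0,1]$; the only point requiring a little care is $x=1$, where $g'$ blows up, but convexity on $[0,1)$ together with continuity of $f_k$ at $1$ suffices.

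Given convexity of $f_k$, the first assertion is immediate: the lower convex envelope of a function is the largest convex function it dominates, and $f_k$ trivially dominates itself, so its lower convex envelope is $f_k$ itself.

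For the second assertion, $-f_k$ is concave on $[0,1]$, and I would invoke the elementary fact that the lower convex envelope of a concave function on $[0,1]$ is the affine function through its two endpoint values: concavity shows that this chord lies pointwise below $-f_k$, while any convex $\rho\le -f_k$ satisfies $\rho(0)\le -f_k(0)$ and $\rho(1)\le -f_k(1)$, hence $\rho$ lies below the chord by convexity. Using $\arccos 0=\pi/2$ and $\arccos 1=0$ we get $f_k(0)=1/2^k$ and $f_k(1)=1$, so the chord passes through $(0,-1/2^k)$ and $(1,-1)$, which is exactly the line $h_k(x)=-1/2^k+(1/2^k-1)x$.

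The main (and essentially only) obstacle is the convexity of $f_k$; once the computation with $g''$ is in place, together with the endpoint subtlety at $x=1$, the rest is bookkeeping with the definition of the lower convex envelope and the values of $\arccos$ at $0$ and $1$.
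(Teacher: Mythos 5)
Your proof is correct. The paper states Lemma~\ref{lem:lce} without giving a proof, so there is nothing to compare against; your argument is the natural one. Writing $f_k=\phi\circ g$ with $\phi(t)=t^k$ (nondecreasing and convex on $[0,\infty)$) and $g(x)=1-\arccos(x)/\pi$, and checking $g'(x)=\frac{1}{\pi\sqrt{1-x^2}}>0$ and $g''(x)=\frac{x}{\pi(1-x^2)^{3/2}}\ge 0$ on $[0,1)$, gives convexity of $f_k$ by the standard composition rule, which settles the first claim. The second claim then follows because for a concave function on a closed interval the lower convex envelope is the chord through its endpoints; with $f_k(0)=1/2^k$ and $f_k(1)=1$ this chord is exactly $h_k(x)=-1/2^k+(1/2^k-1)x$. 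You were right to flag the endpoint $x=1$, where $g'$ blows up, but as you observe convexity on $[0,1)$ together with continuity at $1$ extends to the closed interval, so the argument is complete.
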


The following lemma lower bounds the expected modularity value of the output of \textsf{Hyperplane}($k$).
\begin{lemma}\label{lem:lb}
Let $\mathcal{C}_k$ be the output of \textsf{Hyperplane}($k$). For any positive integer $k$, it holds that 
\begin{align*}
\text{E}[Q(\mathcal{C}_k)]\geq q\left(f_k (z^*_+) +h_k (-z^*_-)\right).
\end{align*}
\end{lemma}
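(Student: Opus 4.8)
The plan is to write $\mathrm{E}[Q(\mathcal{C}_k)]$ as a sum over pairs $(i,j)\in P$ of the quantity $q_{ij}\cdot\Pr[\mathcal{C}_k(i)=\mathcal{C}_k(j)]$, substitute the known probability $f_k(x^*_{ij})$, and then split the sum along the partition $P=P_{\ge 0}\cup P_{<0}$. On $P_{\ge 0}$ the coefficients $q_{ij}$ are nonnegative, so I want a \emph{lower} bound on $f_k(x^*_{ij})$ that is itself a convex function of $x^*_{ij}$; by Lemma~\ref{lem:lce} the lower convex envelope of $f_k$ is $f_k$ itself, so no relaxation is needed there and I keep $f_k(x^*_{ij})$. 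On $P_{<0}$ the coefficients are negative, so $q_{ij}f_k(x^*_{ij}) = -|q_{ij}|\,f_k(x^*_{ij}) = |q_{ij}|\bigl(-f_k(x^*_{ij})\bigr)$, and I need a lower bound on $-f_k$ by a convex function; Lemma~\ref{lem:lce} gives exactly that this lower convex envelope is the linear function $h_k(x)=-1/2^k+(1/2^k-1)x$. So termwise I get $\mathrm{E}[Q(\mathcal{C}_k)]\ge \sum_{P_{\ge 0}}q_{ij}f_k(x^*_{ij}) + \sum_{P_{<0}}|q_{ij}|\,h_k(x^*_{ij})$.

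Next I would apply Jensen's inequality in the averaged form. Since $\sum_{(i,j)\in P_{\ge 0}}q_{ij}=q$ (by definition of $q$), the weights $q_{ij}/q$ form a probability distribution over $P_{\ge 0}$, and convexity of $f_k$ gives $\frac{1}{q}\sum_{P_{\ge 0}}q_{ij}f_k(x^*_{ij})\ge f_k\!\bigl(\frac{1}{q}\sum_{P_{\ge 0}}q_{ij}x^*_{ij}\bigr)=f_k(z^*_+)$. For the negative part, $\sum_{(i,j)\in P_{<0}}|q_{ij}| = \sum_{(i,j)\in P_{<0}}(-q_{ij}) = q$ as well, so the weights $|q_{ij}|/q$ again form a probability distribution; but here I do not even need Jensen since $h_k$ is linear — I just pull the sum inside: $\frac{1}{q}\sum_{P_{<0}}|q_{ij}|\,h_k(x^*_{ij}) = h_k\!\bigl(\frac{1}{q}\sum_{P_{<0}}|q_{ij}|x^*_{ij}\bigr) = h_k(-z^*_-)$, using $z^*_- = \frac{1}{q}\sum_{P_{<0}}q_{ij}x^*_{ij}$ so that $-z^*_- = \frac{1}{q}\sum_{P_{<0}}|q_{ij}|x^*_{ij}$. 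Multiplying through by $q$ and adding the two bounds yields $\mathrm{E}[Q(\mathcal{C}_k)]\ge q\bigl(f_k(z^*_+)+h_k(-z^*_-)\bigr)$, as claimed.

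A couple of technical points I would check carefully. First, the probability formula $f_k(x^*_{ij})$ is quoted for $x^*_{ij}\in[0,1]$; the SDP constraints $x_{ij}\ge 0$ and $x_{ij}\le 1$ guarantee the arguments lie in the right range, and the diagonal terms $(i,i)$ contribute with $x^*_{ii}=1$, $f_k(1)=1$, which is consistent and harmless. Second, I should make sure $P_{\ge 0}$ and $P_{<0}$ are both nonempty (equivalently $q>0$) so that the averaging step is legitimate; if $q=0$ then every $q_{ij}=0$, the inequality is trivially $0\ge 0$, and one can dispatch that degenerate case separately in one line.

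The main obstacle is really bookkeeping rather than conceptual: one must be scrupulous about signs when moving the minus sign from $q_{ij}$ onto $f_k$, and about the fact that the \emph{same} normalizing constant $q$ appears in both sums (this is the content of $\sum_{P_{\ge 0}}q_{ij}=\sum_{P_{<0}}(-q_{ij})=q$ recorded in the Preliminaries), so that after dividing by $q$ both parts become genuine convex combinations and the definitions of $z^*_+$ and $z^*_-$ match up exactly with the points at which $f_k$ and $h_k$ are evaluated. Once the signs and the normalization are lined up, the convexity/linearity facts from Lemma~\ref{lem:lce} plus Jensen finish it immediately.
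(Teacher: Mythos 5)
Your proof is correct and follows essentially the same route as the paper's: expand the expectation termwise, split $P$ into $P_{\ge 0}$ and $P_{<0}$, replace $-f_k$ by its lower convex envelope $h_k$ on the negative part, normalize by $q$ to form convex combinations, and apply Jensen (the paper applies Jensen to both halves, while you correctly observe it is an equality on the linear $h_k$ half, but this is a presentational difference only). Your remark about the degenerate case $q=0$ is a careful observation the paper leaves implicit, but it does not change the argument.
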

\begin{proof}
Recall that $\mathcal{C}_k(i)$ for each $i\in V$ denotes the (unique) cluster in $\mathcal{C}_k$ that includes the vertex $i$.
Note here that $\delta(\mathcal{C}_k(i),\mathcal{C}_k(j))$ for each $(i,j)\in P$ is a random variable, 
which takes 1 with probability $f_k(x^*_{ij})$ and 0 with probability $1-f_k(x^*_{ij})$. 
The expectation $\text{E}[Q(\mathcal{C}_k)]$ is lower bounded as follows: 
\begin{align*}
\text{E}[Q(\mathcal{C}_k)] &= \text{E}\left[\sum_{(i,j)\in P}q_{ij}\delta(\mathcal{C}_k(i),\mathcal{C}_k(j))\right]\\
&=\sum_{(i,j)\in P}q_{ij}f_k(x^*_{ij})\\
&=\sum_{(i,j)\in P_{\geq 0}}q_{ij}f_k(x^*_{ij}) + \sum_{(i,j)\in P_{<0}}-q_{ij}\cdot (-f_k(x^*_{ij}))\\
&\geq \sum_{(i,j)\in P_{\geq 0}}q_{ij}f_k(x^*_{ij}) + \sum_{(i,j)\in P_{<0}}-q_{ij}h_k(x^*_{ij})\\
&=q\left(\sum_{(i,j)\in P_{\geq 0}}\left(\frac{q_{ij}}{q}\right)f_k(x^*_{ij}) + \sum_{(i,j)\in P_{<0}}\left(\frac{-q_{ij}}{q}\right)h_k(x^*_{ij})\right)\\
&\geq q\left(f_k\left(\sum_{(i,j)\in P_{\geq 0}}\left(\frac{q_{ij}}{q}\right)x^*_{ij}\right) + h_k\left(\sum_{(i,j)\in P_{<0}}\left(\frac{-q_{ij}}{q}\right)x^*_{ij}\right)\right)\\
&=q\left(f_k(z^*_+)+h_k(-z^*_-)\right),
\end{align*}
where the last inequality follows from Jensen's inequality. 
\end{proof}

The following lemma provides an additive approximation error of \textsf{Hyperplane}($k$) 
by evaluating the above lower bound on $\text{E}[Q(\mathcal{C}_k)]$ using the value of $\OPT$.
\begin{lemma}\label{lem:lb_opt}
For any positive integer $k$, it holds that
\begin{align*}
\text{E}[Q(\mathcal{C}_k)]\geq \OPT-q\left(z^*_+ -f_k(z^*_+)+\frac{1}{2^k}\right).
\end{align*}
\end{lemma}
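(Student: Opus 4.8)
The plan is to take the bound from Lemma~\ref{lem:lb}, namely $\text{E}[Q(\mathcal{C}_k)]\geq q\left(f_k(z^*_+)+h_k(-z^*_-)\right)$, and trade the dependence on $z^*_-$ for a dependence on $\OPT$. The only extra ingredient needed is the fact that \textsf{SDP} is a relaxation of the original problem, so $\OPT\le \sum_{(i,j)\in P}q_{ij}x^*_{ij}$; splitting $P$ into $P_{\geq 0}$ and $P_{<0}$ and using the definitions of $z^*_+$ and $z^*_-$, the right-hand side is exactly $q z^*_+ + q z^*_-$. Hence $q(-z^*_-)\le q z^*_+-\OPT$, which is the key inequality relating the ``negative part'' of the SDP solution to $\OPT$.

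Next I would substitute the explicit linear form of the lower convex envelope supplied by Lemma~\ref{lem:lce}, $h_k(x)=-1/2^k+(1/2^k-1)x$, evaluated at $x=-z^*_-\ge 0$, to write $q\,h_k(-z^*_-)=-q/2^k+(q/2^k)(-z^*_-)-q(-z^*_-)$. The point requiring care is the sign bookkeeping: since $q=\sum_{(i,j)\in P_{\geq 0}}q_{ij}\ge 0$ and $-z^*_-\ge 0$, the middle term $(q/2^k)(-z^*_-)$ is nonnegative and may be discarded, giving $q\,h_k(-z^*_-)\ge -q/2^k-q(-z^*_-)$; then applying the relaxation bound $q(-z^*_-)\le q z^*_+-\OPT$ from the previous step converts this into $q\,h_k(-z^*_-)\ge \OPT-q z^*_+ - q/2^k$.

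Finally, adding $q f_k(z^*_+)$ to both sides and combining with Lemma~\ref{lem:lb} yields $\text{E}[Q(\mathcal{C}_k)]\ge q f_k(z^*_+)+\OPT-q z^*_+-q/2^k=\OPT-q\left(z^*_+-f_k(z^*_+)+1/2^k\right)$, which is the claimed inequality. I do not anticipate any genuine obstacle: the argument is a short chain of substitutions, and the only subtlety is keeping every inequality pointing the right way when the nonnegative term is dropped and when the relaxation bound is inserted — note in particular that the relaxation bound enters with coefficient exactly $1$, so the full $\OPT$ term survives, whereas a cruder estimate of $-z^*_-$ against the (nonpositive) slope $1/2^k-1$ of $h_k$ would only have recovered $\OPT$ with a diminished coefficient.
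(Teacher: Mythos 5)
Your proof is correct and follows essentially the same route as the paper: both arguments use the relaxation bound $q(z^*_+ + z^*_-)\geq \OPT$, the sign fact $z^*_-\leq 0$, and the explicit linear form of $h_k$, and both end up discarding the same $(q/2^k)z^*_-$ slack term. The paper packages these steps as adding the nonpositive quantity $\OPT - q(z^*_+ + z^*_-)$ to the lower bound of Lemma~\ref{lem:lb} before simplifying, whereas you substitute into $h_k$ first and apply the relaxation bound afterward, but the arithmetic is the same.
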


\begin{proof}
Clearly, $(z^*_+,z^*_-)$ satisfies $q(z^*_+ +z^*_-)\geq \OPT$ and $z^*_-\leq 0$.
Thus, we obtain
\begin{align*}
q\left(f_k (z^*_+) +h_k (-z^*_-)\right)
&\geq \left(\OPT - q(z^*_+ +z^*_-)\right)+ q\left(f_k(z^*_+)+h_k(-z^*_-)\right) \\
&= \left(\OPT - q(z^*_+ +z^*_-)\right)+ q\left(f_k(z^*_+)-1/2^k +(1/2^k-1)(-z^*_-)\right) \\
&= \OPT - q\left(z^*_+ -f_k(z^*_+)+1/2^k +(1/2^k)z^*_-\right) \\
&\geq \OPT - q\left(z^*_+ -f_k(z^*_+)+1/2^k\right). 
\end{align*}
Combining this with Lemma~\ref{lem:lb}, we have
\begin{align*}
\text{E}[Q(\mathcal{C}_k)]\geq q\left(f_k(z^*_+)+h_k(-z^*_-)\right)\geq \OPT -q\left(z^*_+ -f_k(z^*_+)+\frac{1}{2^k}\right),
\end{align*}
as desired.
\end{proof}

For simplicity, we define the function 
\begin{align*}
g_k(x)=x-f_k(x)+\frac{1}{2^k}
\end{align*}
for $x\in [0,1]$. Then, the inequality of the above lemma can be rewritten as 
\begin{align*}
\text{E}[Q(\mathcal{C}_k)]\geq \OPT -q\cdot g_k(z^*_+).
\end{align*}
Figure~\ref{fig:error} plots the above additive approximation error of \textsf{Hyperplane}($k$) with respect to the value of $z^*_+$. 
\begin{figure}[tb]
\centering
\hspace{25mm}
\includegraphics[width=12.5cm]{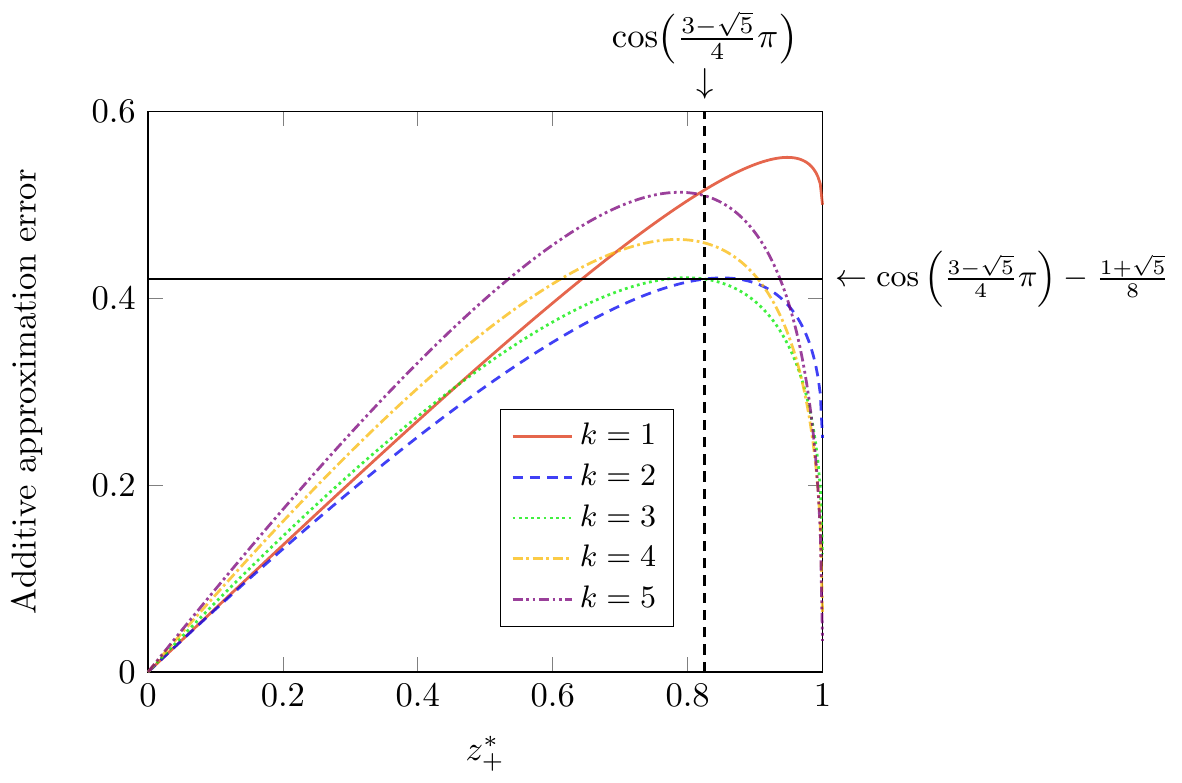}
\caption{A brief illustration of the additive approximation error of \textsf{Hyperplane}($k$) with respect to the value of $z^*_+$. 
For simplicity, we replace $q$ by its upper bound 1. 
Specifically, the function \(g_k(x)=x-f_k(x)+1/2^k\) for $x\in [0,1]$ is plotted for $k=1,2,3,4$, and $5$, as examples. 
The point $\left(\cos\left(\frac{3-\sqrt{5}}{4}\pi\right),\, \cos\left(\frac{3-\sqrt{5}}{4}\pi\right)-\frac{1+\sqrt{5}}{8}\right)$ 
is an intersection of the functions $g_2(x)$ and $g_3(x)$. 
}\label{fig:error}
\end{figure}
As can be seen, the appropriate number of hyperplanes 
(i.e., the number of hyperplanes that minimizes the additive approximation error) depends on the value of $z^*_+$. 
Intuitively, we wish to choose $k^{**}$ that satisfies 
\begin{align*}
k^{**}\in \argmin_{k\in \mathbb{Z}_{>0}}g_k(z^*_+).
\end{align*}
However, it is not clear whether \textsf{Hyperplane}$(k^{**})$ runs in polynomial time. 
In fact, the number $k^{**}$ becomes infinity if the value of $z^*_+$ approaches 1. 
Therefore, alternatively, our algorithm chooses 
\begin{align*}
k^*\in \argmin_{k\in \{1,\dots, \max\{3,\lceil\log_{2}n\rceil\}\}}g_k(z^*_+).
\end{align*}
Our analysis demonstrates that the worst-case performance of \textsf{Hyperplane}$(k^*)$ is exactly the same as that of \textsf{Hyperplane}$(k^{**})$, 
and moreover, the lower bound on the expected modularity value with respect to the value of \textsf{OPT} is not affected by this change. 

The following lemma analyzes the worst-case performance of \textsf{Hyperplane}($k^*$); 
thus, it provides the additive approximation error of \textsf{Hyperplane}($k^*$).
\begin{lemma}\label{lem:minmax}
It holds that
\begin{align*}
\max_{x\in [0,1]}\min_{k\in \{1,\dots, \max\{3,\lceil\log_{2}n\rceil\}\}} g_k(x)
=\cos\left(\frac{3-\sqrt{5}}{4}\pi\right)-\frac{1+\sqrt{5}}{8}.
\end{align*}
\end{lemma}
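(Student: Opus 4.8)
The plan is to prove the two inequalities ``$\ge$'' and ``$\le$'' separately, writing $R = \cos\left(\frac{3-\sqrt{5}}{4}\pi\right) - \frac{1+\sqrt{5}}{8}$ for the claimed value and abbreviating $K = \max\{3,\lceil\log_2 n\rceil\}$. Everything revolves around the single point $x_0 = \cos\left(\frac{3-\sqrt{5}}{4}\pi\right)$: here $\arccos(x_0) = \frac{3-\sqrt{5}}{4}\pi$, so $1 - \frac{\arccos(x_0)}{\pi} = \frac{1+\sqrt{5}}{4} = \frac{\phi}{2}$, where $\phi = \frac{1+\sqrt{5}}{2}$ is the golden ratio, satisfying $\phi^2 = \phi+1$ (and hence $\phi^3 = 2\phi+1$, $\phi^2(2-\phi)=1$). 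Consequently $f_k(x_0) = (\phi/2)^k$ and $g_k(x_0) = x_0 + 2^{-k} - (\phi/2)^k$.

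\emph{Lower bound.} It suffices to exhibit $x_0$ with $\min_{k\in\{1,\dots,K\}} g_k(x_0) = R$. Put $\psi_k = (\phi/2)^k - 2^{-k}$, so $g_k(x_0) = x_0 - \psi_k$. A direct computation gives $\psi_{k+1} - \psi_k = 2^{-k-1}\bigl(1 - (2-\phi)\phi^k\bigr)$, which is $\ge 0$ exactly when $\phi^k \le \frac{1}{2-\phi} = \phi^2$, i.e. exactly when $k \le 2$. Hence $\psi_1 < \psi_2 = \psi_3 > \psi_4 > \cdots$, so $\max_{k\ge 1}\psi_k = \psi_2 = (\phi/2)^2 - \tfrac14 = \frac{1+\sqrt{5}}{8}$. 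Therefore $g_k(x_0) \ge x_0 - \frac{1+\sqrt{5}}{8} = R$ for every $k \ge 1$, with equality at $k = 2$ and $k = 3$; since $2,3 \in \{1,\dots,K\}$, the minimum over $k$ at $x_0$ equals $R$, and so $\max_{x\in[0,1]}\min_k g_k(x) \ge R$.

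\emph{Upper bound.} It suffices to show $\min\{g_2(x), g_3(x)\} \le R$ for all $x \in [0,1]$, since (as $3 \le K$) this upper-bounds $\min_{k\in\{1,\dots,K\}} g_k(x)$. Substituting $t = 1 - \frac{\arccos x}{\pi} \in [\tfrac12, 1]$ (an increasing bijection with $x=0 \mapsto t=\tfrac12$, $x_0 \mapsto t=\tfrac{\phi}{2}$, $x=1 \mapsto t=1$), one has $g_3(x) - g_2(x) = t^2 - t^3 - \tfrac18$, and $8(t^2 - t^3 - \tfrac18) = -(2t-1)(4t^2 - 2t - 1)$, whose only roots in $[\tfrac12,1]$ are $t = \tfrac12$ and $t = \tfrac{\phi}{2}$. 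A sign check then gives $g_3 \ge g_2$ on $[0,x_0]$ and $g_3 \le g_2$ on $[x_0,1]$, so it is enough to prove that $g_2$ is nondecreasing on $[0,x_0]$ and $g_3$ is nonincreasing on $[x_0,1]$: then $g_2(x) \le g_2(x_0) = R$ for $x \le x_0$ and $g_3(x) \le g_3(x_0) = R$ for $x \ge x_0$. With $s(x) = 1 - \frac{\arccos x}{\pi}$ we have $g_2'(x) = 1 - \frac{2s(x)}{\pi\sqrt{1-x^2}}$ and $g_3'(x) = 1 - \frac{3s(x)^2}{\pi\sqrt{1-x^2}}$; since $s$ and $(1-x^2)^{-1/2}$ are increasing on $[0,1)$, both $g_2'$ and $g_3'$ are strictly decreasing, so $g_2'(x_0)\ge 0$ already forces $g_2'\ge 0$ on $[0,x_0]$ and $g_3'(x_0)\le 0$ already forces $g_3'\le 0$ on $[x_0,1)$. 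Using $s(x_0) = \tfrac{\phi}{2}$ and $\sqrt{1 - x_0^2} = \sin\!\left(\frac{3-\sqrt{5}}{4}\pi\right)$, the two sign conditions $g_2'(x_0) \ge 0 \ge g_3'(x_0)$ are precisely the sandwich $\phi \le \pi\sin\!\left(\frac{3-\sqrt{5}}{4}\pi\right) \le 3\left(\frac{\phi}{2}\right)^2 = \frac{9+3\sqrt{5}}{8}$. Combining this with the lower bound gives the claimed equality.

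The step I expect to be the main obstacle is this last numerical sandwich: the quantity $\pi\sin\!\left(\frac{3-\sqrt{5}}{4}\pi\right) \approx 1.774$ is transcendental and cannot be pinned between $\phi \approx 1.618$ and $\frac{9+3\sqrt{5}}{8} \approx 1.964$ by the golden-ratio identities alone, so one must supply explicit rational enclosures of $\pi$, $\sqrt{5}$ and the sine (for instance a truncated Taylor series for $\sin$ together with a sufficiently tight rational interval for $\frac{3-\sqrt{5}}{4}\pi$). The remaining ingredients — the factorization identifying the crossing point $x_0$, the monotonicity of $g_2'$ and $g_3'$, and the one-line recursion for $\psi_k$ — are elementary once the substitution $t = 1 - \frac{\arccos x}{\pi}$ and the relation $\phi^2 = \phi+1$ are in hand. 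Note also that $K \ge 3$ is genuinely needed: $g_2$ alone does not stay below $R$ for $x$ slightly larger than $x_0$, which is exactly why the index range is truncated at $\max\{3,\lceil\log_2 n\rceil\}$ rather than at a smaller value.
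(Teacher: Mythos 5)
Your proof is correct and follows essentially the same strategy as the paper's: evaluate $g_k$ at $x_0 = \cos\left(\frac{3-\sqrt{5}}{4}\pi\right)$ for the lower bound, and for the upper bound use that $g_2$ is nondecreasing on $[0,x_0]$ and $g_3$ is nonincreasing on $[x_0,1]$ (which reduces to sign conditions on $g_2'(x_0)$ and $g_3'(x_0)$ plus concavity/monotonicity of the derivatives). Your lower-bound bookkeeping is cleaner than the paper's: rather than checking $k=1,\ldots,4$ by hand and bounding $k\ge 5$ crudely, you observe that $\psi_k = (\phi/2)^k - 2^{-k}$ satisfies a one-line recursion that, combined with $\phi^2(2-\phi)=1$, immediately gives $\psi_1 < \psi_2 = \psi_3 > \psi_4 > \cdots$; the $t$-substitution and cubic factorization neatly identify $x_0$ as the crossing point of $g_2$ and $g_3$, which the paper never makes explicit. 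You are also right that both proofs ultimately hinge on the same numerical sandwich $\phi \le \pi\sin\left(\frac{3-\sqrt{5}}{4}\pi\right) \le \frac{9+3\sqrt{5}}{8}$ — the paper simply asserts the sign of $g_2'$ and $g_3'$ on the relevant intervals without verifying these bounds, so honestly flagging this obstacle is a point in your favor, not a gap relative to the published argument.
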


\begin{proof}
First, we show that 
\begin{align*}
\max_{x\in [0,1]}\min_{k\in \{1,\dots, \max\{3,\lceil\log_{2}n\rceil\}\}} g_k(x)
\geq \cos\left(\frac{3-\sqrt{5}}{4}\pi\right)-\frac{1+\sqrt{5}}{8}.
\end{align*}
It suffices to show that for any $k\in \mathbb{Z}_{>0}$, 
\begin{align*}
g_k\left(\cos\left(\frac{3-\sqrt{5}}{4}\pi\right)\right)\ge \cos\left(\frac{3-\sqrt{5}}{4}\pi\right)-\frac{1+\sqrt{5}}{8}.
\end{align*}
For $k=1,2,3$, and $4$, we have
\begin{align*}
  g_1\left(\cos\left(\frac{3-\sqrt{5}}{4}\pi\right)\right)-\cos\left(\frac{3-\sqrt{5}}{4}\pi\right)+\frac{1+\sqrt{5}}{8}
  &=\frac{1+\sqrt{5}}{8}-\left(\frac{1+\sqrt{5}}{4}\right)+\frac{1}{2}>0,\\
  g_2\left(\cos\left(\frac{3-\sqrt{5}}{4}\pi\right)\right)-\cos\left(\frac{3-\sqrt{5}}{4}\pi\right)+\frac{1+\sqrt{5}}{8}
  &=\frac{1+\sqrt{5}}{8}-\left(\frac{1+\sqrt{5}}{4}\right)^2+\frac{1}{4}
  =0,\\
  g_3\left(\cos\left(\frac{3-\sqrt{5}}{4}\pi\right)\right)-\cos\left(\frac{3-\sqrt{5}}{4}\pi\right)+\frac{1+\sqrt{5}}{8}
  &=\frac{1+\sqrt{5}}{8}-\left(\frac{1+\sqrt{5}}{4}\right)^3+\frac{1}{8}
  =0,
  \end{align*}
and
  \begin{align*}
  g_4\left(\cos\left(\frac{3-\sqrt{5}}{4}\pi\right)\right)-\cos\left(\frac{3-\sqrt{5}}{4}\pi\right)+\frac{1+\sqrt{5}}{8}
  &=\frac{1+\sqrt{5}}{8}-\left(\frac{1+\sqrt{5}}{4}\right)^4+\frac{1}{16}
  >0, 
\end{align*}
respectively. For $k\geq 5$, we have
\begin{align*}
  g_k\left(\cos\left(\frac{3-\sqrt{5}}{4}\pi\right)\right)-\cos\left(\frac{3-\sqrt{5}}{4}\pi\right)+\frac{1+\sqrt{5}}{8}
  &= \frac{1+\sqrt{5}}{8}-\left(\frac{1+\sqrt{5}}{4}\right)^k+\frac{1}{2^k}\\
  &> \frac{1+\sqrt{5}}{8}-\left(\frac{1+\sqrt{5}}{4}\right)^5>0.
\end{align*}
Thus, for any $k\in \mathbb{Z}_{>0}$, we obtain
\begin{align*}
g_k\left(\cos\left(\frac{3-\sqrt{5}}{4}\pi\right)\right)\ge \cos\left(\frac{3-\sqrt{5}}{4}\pi\right)-\frac{1+\sqrt{5}}{8}.
\end{align*}

Next, we show that 
\begin{align*}
\max_{x\in [0,1]}  \min_{k\in \{1,\dots, \max\{3,\lceil\log_{2}n\rceil\}\}} g_k(x)
\leq \cos\left(\frac{3-\sqrt{5}}{4}\pi\right)-\frac{1+\sqrt{5}}{8}.
\end{align*}
By simple calculation, we get
\begin{align*}
g'_2(x)= 1-\frac{2(1-\arccos(x)/\pi)}{\pi\sqrt{1-x^2}}\quad 
\text{and}\quad g'_3(x)=1-\frac{3(1-\arccos(x)/\pi)^2}{\pi\sqrt{1-x^2}}.
\end{align*} 
Let us take an arbitrary $x$ with $0\leq x\leq \cos\left(\frac{3-\sqrt{5}}{4}\pi\right)$.
Since $g'_2(x)>0$, we have  
\begin{align*}
    g_2(x)\leq g_2\left(\cos\left(\frac{3-\sqrt{5}}{4}\pi\right)\right)=\cos\left(\frac{3-\sqrt{5}}{4}\pi\right)-\frac{1+\sqrt{5}}{8}. 
\end{align*}
On the other hand, take $x$ with $\cos\left(\frac{3-\sqrt{5}}{4}\pi\right)\leq x<1$. 
Since $g'_3(x)<0$, we have
\begin{align*}
g_3(x)\leq g_3\left(\cos\left(\frac{3-\sqrt{5}}{4}\pi\right)\right)=\cos\left(\frac{3-\sqrt{5}}{4}\pi\right)-\frac{1+\sqrt{5}}{8}. 
\end{align*}
Note finally that $g_3(1)<\cos\left(\frac{3-\sqrt{5}}{4}\pi\right)-\frac{1+\sqrt{5}}{8}$ holds. Therefore, we obtain
\begin{align*}
\max_{x\in[0,1]}  \min_{k\in \{1,\dots, \max\{3,\lceil\log_{2}n\rceil\}\}} g_k(x)\leq \max_{x\in [0,1]}\min_{k\in \{2,3\}} g_k(x) \leq \cos\left(\frac{3-\sqrt{5}}{4}\pi\right)-\frac{1+\sqrt{5}}{8},
\end{align*}
as desired. 
\end{proof}

\begin{remark}
From the proof of the above lemma, it follows directly that 
\begin{align*}
\max_{x\in [0,1]}\min_{k\in \mathbb{Z}_{>0}} g_k(x)
=\cos\left(\frac{3-\sqrt{5}}{4}\pi\right)-\frac{1+\sqrt{5}}{8}.
\end{align*}
This implies that the worst-case performance of \textsf{Hyperplane}$(k^{**})$ is no better than that of \textsf{Hyperplane}$(k^*)$. 
\end{remark}

\begin{remark}
Here we consider the algorithm that executes \textsf{Hyperplane}($2$) and \textsf{Hyperplane}($3$), 
and then returns the better solution. 
Note that this algorithm is essentially the same as that proposed by Dinh, Li, and Thai~\cite{DiLiTh15}. 
From the proof of the above lemma, it follows immediately that 
\begin{align*}
\max_{x\in [0,1]}\min_{k\in \{2,3\}} g_k(x)
=\cos\left(\frac{3-\sqrt{5}}{4}\pi\right)-\frac{1+\sqrt{5}}{8}.
\end{align*}
This implies that the algorithm by Dinh, Li, and Thai~\cite{DiLiTh15} already has the worst-case performance 
exactly the same as that of \textsf{Hyperplane}($k^*$).  
However, as shown below, \textsf{Hyperplane}($k^*$) has a much better lower bound on the expected modularity value for many instances. 
\end{remark}

\if 0
\begin{remark}
Here we consider the algorithm that executes \textsf{Hyperplane}($2$) and \textsf{Hyperplane}($3$), and then returns the better solution. 
Note that this algorithm is essentially the same as that proposed by Dinh, Li, and Thai~\cite{DiLiTh15}. 
From the proof of the above lemma, it follows directly that 
\begin{align*}
\max_{x\in [0,1]}\min_{k\in \{2,3\}} g_k(x)
=\cos\left(\frac{3-\sqrt{5}}{4}\pi\right)-\frac{1+\sqrt{5}}{8}.
\end{align*}
This implies that the algorithm by Dinh, Li, and Thai~\cite{DiLiTh15} already has the worst-case performance exactly the same as that of \textsf{Hyperplane}($k^*$).  
However, as shown below, \textsf{Hyperplane}($k^*$) has a much better lower bound on the expected modularity value in many cases. 
\end{remark}

\begin{remark}
Furthermore, it follows immediately that 
\begin{align*}
\max_{x\in [0,1]}\min_{k\in \mathbb{Z}_{>0}} g_k(x)
=\cos\left(\frac{3-\sqrt{5}}{4}\pi\right)-\frac{1+\sqrt{5}}{8}.
\end{align*}
This implies that the worst-case performance of \textsf{Hyperplane}$(k^{**})$ is no better than that of \textsf{Hyperplane}$(k^*)$ 
(and the algorithm proposed by Dinh, Li, and Thai~\cite{DiLiTh15}). 
\end{remark}
\fi


Finally, we present a lower bound on the expected modularity value of the output of \textsf{Hyperplane}($k^*$) 
with respect to the value of $\OPT$ (rather than $z^*_+$). 
The following lemma is useful to show that  
the lower bound on the expected modularity value with respect to the value of \textsf{OPT} is not affected by the change from $k^{**}$ to $k^*$. 
The proof can be found in Appendix~\ref{apx:compromise}. 

\begin{lemma}\label{lem:compromise}
For any $k'\in \argmin_{k\in \mathbb{Z}_{>0}} g_k(\OPT)$,
it holds that $k'\leq \max\{3,\lceil\log_{2}n\rceil\}$. 
\end{lemma}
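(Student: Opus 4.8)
The plan is to reduce the statement to an elementary monotonicity analysis of a one‑parameter integer sequence. Set $t:=1-\arccos(\OPT)/\pi$; since $\OPT\in[0,1)$ we have $t\in[\tfrac12,1)$ and $f_k(\OPT)=t^k$, hence $g_k(\OPT)=\OPT+\bigl(2^{-k}-t^k\bigr)$, so that $\argmin_{k\in\mathbb{Z}_{>0}}g_k(\OPT)=\argmin_{k\in\mathbb{Z}_{>0}}\psi(k)$ with $\psi(k):=2^{-k}-t^k$. First I would compute the forward difference
\[
\psi(k+1)-\psi(k)=t^k(1-t)-2^{-(k+1)},
\]
and note that its sign equals that of $(2t)^k(1-t)-\tfrac12$; since $2t\ge 1$, the quantity $(2t)^k(1-t)$ is nondecreasing in $k$, so $\psi$ is nonincreasing and then nondecreasing on $\mathbb{Z}_{>0}$, and every minimizer lies in $\{k_0,k_0+1\}$, where $k_0$ is the least $k\ge 1$ with $(2t)^k(1-t)\ge\tfrac12$. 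It therefore suffices to exhibit an integer $k\le K:=\max\{3,\lceil\log_2 n\rceil\}$ with $(2t)^k(1-t)>\tfrac12$ strictly (the strictness also rules out $k_0+1$ being a minimizer in the boundary case $k_0=K$). The value $\OPT=0$ is degenerate: there $t=\tfrac12$, $\psi\equiv 0$, and $\min_{k\le K}g_k(\OPT)=\min_{k\in\mathbb{Z}_{>0}}g_k(\OPT)=0$, which is all that is needed downstream; so I would assume $\OPT>0$, i.e.\ $t>\tfrac12$, from here on.

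Next I would split on the size of $\OPT$. If $0<\OPT\le 1/\sqrt2$, equivalently $\tfrac12<t\le\tfrac34$, then $\psi(2)-\psi(1)=-(t-\tfrac12)^2<0$ while $\psi(3)-\psi(2)=t^2(1-t)-\tfrac18>0$ (because $t\mapsto t^2(1-t)$ is increasing on $[0,\tfrac23]$ and stays above $\tfrac18$ on $[\tfrac23,\tfrac34]$); equivalently $(2t)^2(1-t)>\tfrac12$, so by the monotonicity of $(2t)^k(1-t)$ we get $\psi(k+1)-\psi(k)>0$ for every $k\ge 2$, whence $\psi$ strictly decreases up to $k=2$ and strictly increases afterwards, and the unique minimizer is $2\le 3\le K$. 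If instead $\OPT>1/\sqrt2$, i.e.\ $t>\tfrac34$, I would first record the elementary bound $Q(\mathcal{C})\le 1-1/|\mathcal{C}|\le 1-1/n$ valid for every partition $\mathcal{C}$ (from $\sum_{C}m_C\le m$ together with $\sum_{C}(D_C/2m)^2\ge 1/|\mathcal{C}|$ by Cauchy--Schwarz), so $\OPT\le 1-1/n$ and in particular $n\ge 4$. Then the elementary inequality $\arccos(x)\ge\sqrt{2(1-x)}$ on $[0,1]$ (a one‑line sign check of the derivative) yields
\[
1-t=\frac{\arccos(\OPT)}{\pi}\ge\frac{\sqrt{2(1-\OPT)}}{\pi}\ge\frac{\sqrt2}{\pi\sqrt n},
\]
and combining $2t>\tfrac32$ with $K\ge\log_2 n$ gives $(2t)^K>(\tfrac32)^{\log_2 n}=n^{\log_2(3/2)}$, hence
\[
(2t)^K(1-t)>n^{\log_2(3/2)}\cdot\frac{\sqrt2}{\pi\sqrt n}=\frac{\sqrt2}{\pi}\,n^{\log_2(3/2)-1/2}.
\]
Since $\log_2(3/2)>1/2$, the right‑hand side is increasing in $n$ and already exceeds $\tfrac12$ at $n=4$; hence $(2t)^K(1-t)>\tfrac12$. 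In both cases $k_0\le K$, so every minimizer of $g_k(\OPT)$ over $\mathbb{Z}_{>0}$ is at most $K=\max\{3,\lceil\log_2 n\rceil\}$, as claimed.

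The step I expect to be the main obstacle is the quantitative estimate in the case $\OPT>1/\sqrt2$: it requires a lower bound on $1-t$ in terms of $n$ that is still strong enough for small $n$. The crude estimate $\arccos(x)\ge 1-x$ only gives $1/(1-t)=O(n)$ and is \emph{not} sufficient; the square‑root bound $\arccos(x)\ge\sqrt{2(1-x)}$, with its sharp constant $\sqrt2$, is exactly what makes $(2t)^K(1-t)>\tfrac12$ hold for every $n\ge 4$ rather than only asymptotically --- and the estimate is tight near $n=4$, which is precisely why confining this case to $t>\tfrac34$ (so that $\OPT\le 1-1/n$ forces $n\ge 4$) is essential. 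Everything else --- the forward‑difference identity, the two derivative checks, and the comparison of powers --- is routine.
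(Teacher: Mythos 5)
Your proof is correct, and it follows a genuinely different route from the paper. You rewrite $g_k(\OPT)=\OPT+\psi(k)$ with $\psi(k)=2^{-k}-t^k$, $t=f_1(\OPT)$, observe that the forward difference $\psi(k+1)-\psi(k)$ has the sign of $(2t)^k(1-t)-\tfrac12$, and since $(2t)^k(1-t)$ is nondecreasing in $k$ this makes $\psi$ unimodal with minimizers confined to $\{k_0,k_0+1\}$; it then suffices to certify $(2t)^k(1-t)>\tfrac12$ for some $k\le K$. The paper has no such global structural observation: it splits at the threshold $\cos\left(\frac{3-\sqrt5}{4}\pi\right)\approx 0.809$, and in the small-$\OPT$ regime shows $g_k(\OPT)\ge g_2(\OPT)$ for all $k\ge 3$ through an explicit factoring $g_k-g_2=(f_1-\tfrac12)\bigl((f_1+\tfrac12)-l_k(f_1)\bigr)$ and a convexity bound on $l_k(x)=\tfrac{1}{2^{k-1}}\sum_{i<k}(2x)^i$, while in the large-$\OPT$ regime it shows $g_{k+1}>g_k$ for $k\ge\log_2 n$ directly. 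You instead split at $\OPT=1/\sqrt2$ (i.e.\ $t=3/4$), handle the small case by a two-line check that $\psi(2)<\psi(1)$ and $\psi(3)>\psi(2)$ (pinning the unique minimizer at $k=2$), and in the large case combine $2t>\tfrac32$ with the same two key estimates the paper uses — $\OPT\le 1-1/n$ and $\arccos(x)\ge\sqrt{2(1-x)}$. Your unimodality framing is cleaner and more transparent about \emph{why} only a narrow window of $k$ can ever matter; the trade-off is that your final numerical margin is thinner (at $n=4$, $\tfrac{9\sqrt2}{8\pi}\approx 0.5064$ versus the needed $\tfrac12$, compared with the paper's $\tfrac{2\sqrt2}{\pi}n^{1/6}$ which clears $1$ with more room), precisely because $2t>3/2$ is a weaker lower bound on $f_1(\OPT)$ than the paper's $f_1(\OPT)\ge\frac{1+\sqrt5}{4}$ derived from the tighter threshold. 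One more point in your favor: you explicitly flag the degenerate case $\OPT=0$, where $\psi$ is constant and $\argmin_{k}g_k(\OPT)=\mathbb{Z}_{>0}$, so the lemma as literally stated fails — the paper's proof has the same silent gap (its Case~1 conclusion, ``$\argmin$ contains $1$ or $2$'', is weaker than the lemma's universal claim) but does not remark on it; both arguments establish exactly what Theorem~\ref{thm:mod} actually needs, namely $\min_{k\le K}g_k(\OPT)=\min_{k\in\mathbb{Z}_{>0}}g_k(\OPT)$.
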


We are now ready to prove the following theorem. 
\begin{theorem}\label{thm:mod}
Let $\mathcal{C}_{k^*}$ be the output of \textsf{Hyperplane}($k^*$). It holds that 
\begin{align*}
\text{E}[Q(\mathcal{C}_{k^*})]\geq \OPT -q\left(\cos\left(\frac{3-\sqrt{5}}{4}\pi\right)-\frac{1+\sqrt{5}}{8}\right). 
\end{align*}
In particular, if $\textsf{OPT}\geq \cos\left(\frac{3-\sqrt{5}}{4}\pi\right)$ holds, then 
\begin{align*}
\text{E}[Q(\mathcal{C}_{k^*})]> \OPT - q\min_{k\in \mathbb{Z}_{>0}} g_k(\OPT). 
\end{align*}
Note here that $q<1$ and $\cos\left(\frac{3-\sqrt{5}}{4}\pi\right)-\frac{1+\sqrt{5}}{8} < 0.42084$. 
\end{theorem}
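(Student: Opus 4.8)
Write $\xi=\cos\left(\frac{3-\sqrt{5}}{4}\pi\right)$, so that the claimed error term is $q\left(\xi-\frac{1+\sqrt{5}}{8}\right)$. The first inequality follows directly from the machinery already in place: applying Lemma~\ref{lem:lb_opt} (in the $g_k$ form) with $k=k^*$ gives $\text{E}[Q(\mathcal{C}_{k^*})]\ge\OPT-q\,g_{k^*}(z^*_+)$, and since $k^*$ minimizes $g_k(z^*_+)$ over $k\in\{1,\dots,\max\{3,\lceil\log_2 n\rceil\}\}$, since $z^*_+\in[0,1]$, and since $q\ge 0$, Lemma~\ref{lem:minmax} bounds $g_{k^*}(z^*_+)=\min_k g_k(z^*_+)\le\max_{x\in[0,1]}\min_k g_k(x)=\xi-\frac{1+\sqrt{5}}{8}$. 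Substituting yields the stated bound; the auxiliary facts $q<1$ and $\xi-\frac{1+\sqrt{5}}{8}<0.42084$ have already been recorded.

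Now assume $\OPT\ge\xi$. From $q(z^*_++z^*_-)\ge\OPT$ and $z^*_-\le 0$ (so $qz^*_-\le 0$) we get $qz^*_+\ge\OPT$; together with $z^*_+\le 1$ and $q<1$ this forces $q\ge\OPT\ge\xi>0$ and $z^*_+\ge\OPT/q>\OPT$, i.e.\ $\xi\le\OPT<z^*_+\le 1$. Fix $\hat k\in\argmin_{k\in\mathbb{Z}_{>0}}g_k(\OPT)$; by Lemma~\ref{lem:compromise}, $\hat k\le\max\{3,\lceil\log_2 n\rceil\}$, so $\hat k$ is among the indices over which $k^*$ is chosen, hence $g_{k^*}(z^*_+)\le g_{\hat k}(z^*_+)$. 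It therefore suffices to prove
\begin{align*}
g_{\hat k}(z^*_+)<g_{\hat k}(\OPT)=\min_{k\in\mathbb{Z}_{>0}}g_k(\OPT),
\end{align*}
since then $\text{E}[Q(\mathcal{C}_{k^*})]\ge\OPT-q\,g_{k^*}(z^*_+)\ge\OPT-q\,g_{\hat k}(z^*_+)>\OPT-q\min_k g_k(\OPT)$, using $q>0$.

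The plan for the display is to show that $g_{\hat k}$ is strictly decreasing on $[\OPT,1)$; since it is continuous on $[\OPT,1]$ and $\OPT<z^*_+\le 1$, that gives $g_{\hat k}(z^*_+)<g_{\hat k}(\OPT)$. I rely on two facts. (a) Each $g_k$ is strictly concave on $(0,1)$: here $g_k'(x)=1-f_k'(x)$ with $f_k'(x)=\frac{k}{\pi}\left(1-\frac{\arccos x}{\pi}\right)^{k-1}\frac{1}{\sqrt{1-x^2}}$ a product of a positive strictly increasing function and a nonnegative nondecreasing function on $(0,1)$, hence strictly increasing. (b) The first part of the proof of Lemma~\ref{lem:minmax} already shows $g_k(\xi)\ge\xi-\frac{1+\sqrt{5}}{8}$ for every $k$, with equality exactly when $k\in\{2,3\}$, and moreover $g_3$ is strictly decreasing on $[\xi,1)$ with $g_3(\xi)=\xi-\frac{1+\sqrt{5}}{8}$. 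If $g_{\hat k}'(\OPT)\le 0$, then by (a) $g_{\hat k}'<0$ on all of $(\OPT,1)$, so $g_{\hat k}$ is strictly decreasing there and we are done. If instead $g_{\hat k}'(\OPT)>0$, then by (a) $g_{\hat k}$ is increasing on $[\xi,\OPT]$, so
\begin{align*}
\xi-\frac{1+\sqrt{5}}{8}\le g_{\hat k}(\xi)\le g_{\hat k}(\OPT)=\min_{k}g_k(\OPT)\le g_3(\OPT)\le g_3(\xi)=\xi-\frac{1+\sqrt{5}}{8},
\end{align*}
so equality holds throughout; by (b) the left-hand equality forces $\hat k\in\{2,3\}$, the middle equality gives $3\in\argmin_k g_k(\OPT)$, and $g_3'(\OPT)<0$ (from the proof of Lemma~\ref{lem:minmax}) rules out $\hat k=3$; hence $\hat k=2$, and we may re-take $\hat k=3$, which lands in the first case.

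I expect the genuine obstacle to be this last step: the optimal index $\hat k$ grows without bound as $\OPT\to 1$, so a priori $g_{\hat k}$ could sit on its increasing branch at $\OPT$. Concavity reduces the question to placing $\OPT$ to the right of the maximizer of $g_{\hat k}$, and the strict inequality $g_{\hat k}(\xi)>\xi-\frac{1+\sqrt{5}}{8}$ for $\hat k\notin\{2,3\}$ (exactly the computation performed in Lemma~\ref{lem:minmax}) combined with the monotonicity of $g_3$ on $[\xi,1)$ is what closes it. Everything else is routine: Part~1 is a one-line combination of Lemmas~\ref{lem:lb_opt} and~\ref{lem:minmax}, and the $\hat k\in\{2,3\}$ bookkeeping is elementary.
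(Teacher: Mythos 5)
Your proof is correct, and it rests on the same ingredients the paper uses: strict concavity of each $g_k$, the peak at $\xi=\cos\left(\frac{3-\sqrt{5}}{4}\pi\right)$ established by the computations in Lemma~\ref{lem:minmax}, the bound $z^*_+>\OPT$ obtained from $q(z^*_++z^*_-)\geq\OPT$, $z^*_-\leq 0$, $q<1$, and Lemma~\ref{lem:compromise}. The difference is one of organization rather than substance. The paper observes that $\min_{k\in\{1,\dots,\max\{3,\lceil\log_2 n\rceil\}\}}g_k(x)$ is itself strictly concave (a pointwise minimum of strictly concave functions) with its unique maximizer at $\xi$, hence strictly decreasing on $[\xi,1]$; this gives $\min_k g_k(z^*_+)<\min_k g_k(\OPT)$ in one stroke, and Lemma~\ref{lem:compromise} then identifies the bounded minimum with the minimum over all of $\mathbb{Z}_{>0}$. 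You instead fix a particular $\hat k\in\argmin_{k\in\mathbb{Z}_{>0}}g_k(\OPT)$, place it inside the range of $k^*$ via Lemma~\ref{lem:compromise}, and argue that $g_{\hat k}$ itself is strictly decreasing past $\OPT$, splitting on the sign of $g_{\hat k}'(\OPT)$. The positive-derivative branch of your case analysis is valid but delicate: the equality chain does force $\OPT=\xi$ and $3\in\argmin_k g_k(\OPT)$, and re-taking $\hat k=3$ with $g_3'(\xi)<0$ does land you in the other branch. The paper's route avoids this bookkeeping entirely because the pointwise minimum inherits concavity and has its peak pinned at $\xi$ by exactly the Lemma~\ref{lem:minmax} computations you invoke as fact (b); working with that aggregate function rather than a single $g_{\hat k}$ is what buys the cleaner argument.
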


\begin{proof}
From Lemmas~\ref{lem:lb_opt} and \ref{lem:minmax}, it follows directly that 
\begin{align*}
\text{E}[Q(\mathcal{C}_{k^*})]\geq \OPT -q\left(\cos\left(\frac{3-\sqrt{5}}{4}\pi\right)-\frac{1+\sqrt{5}}{8}\right). 
\end{align*}

Here we prove the remaining part of the theorem. 
Assume that $\textsf{OPT}\geq \cos\left(\frac{3-\sqrt{5}}{4}\pi\right)$ holds. 
By simple calculation, for any $k\in \mathbb{Z}_{>0}$, we have 
\begin{align*}
g''_k(x)= -\frac{k(1-\arccos(x)/\pi)^{k-2}\left((k-1)\sqrt{1-x^2}+\pi x\left(1 - \arccos(x)/\pi\right)\right)}{\pi^2 (1-x^2)^{3/2}},
\end{align*}
all of which are negative for $x\in (0,1)$. 
This means that for any $k\in \mathbb{Z}_{>0}$, the function $g_k(x)$ is strictly concave, 
and moreover, so is the function $\min_{k\in \{1,\dots,\max\{3,\lceil\log_{2}n\rceil\}\}}g_k(x)$.
From the proof of Lemma~\ref{lem:minmax}, 
the function $ \min_{k\in \{1,\dots,\max\{3,\lceil\log_{2}n\rceil\}\}}g_k(x)$ attains its maximum 
(i.e., $\cos\left(\frac{3-\sqrt{5}}{4}\pi\right)-\frac{1+\sqrt{5}}{8}$) 
at $x=\cos\left(\frac{3-\sqrt{5}}{4}\pi\right)$. 
Thus, the function $ \min_{k\in \{1,\dots,\max\{3,\lceil\log_{2}n\rceil\}\}}g_k(x)$ is strictly monotonically decreasing 
over the interval $\left[\cos\left(\frac{3-\sqrt{5}}{4}\pi\right),\, 1\right]$. 
Therefore, we have 
\begin{align*}
	\text{E}[Q(\mathcal{C}_{k^*})] &\geq \OPT -q\min_{k\in \{1,\dots,\max\{3,\lceil\log_{2}n\rceil\}\}} g_k(z^*_+)\\
							   &> \OPT -q  \min_{k\in \{1,\dots,\max\{3,\lceil\log_{2}n\rceil\}\}} g_k(\OPT)\\
							   &= \OPT -q\min_{k\in \mathbb{Z}_{>0}} g_k(\OPT), 
\end{align*}
where the second inequality follows from $z^*_+\geq \OPT/q>\OPT$ and the last equality follows from Lemma~\ref{lem:compromise}. 
\end{proof}

Figure~\ref{fig:lb} depicts the above lower bound on $\text{E}[Q(\mathcal{C}_{k^*})]$. 
As can be seen, if $\OPT$ is close to $1$, then \textsf{Hyperplane}($k^*$) obtains a nearly-optimal solution. 
For example, for any instance with $\OPT \geq 0.99900$, it holds that $\text{E}[Q(\mathcal{C}_{k^*})]> 0.90193$, 
i.e., the additive approximation error is less than 0.09807.  
\begin{figure}[tb]
\centering
\includegraphics[width=9.3cm]{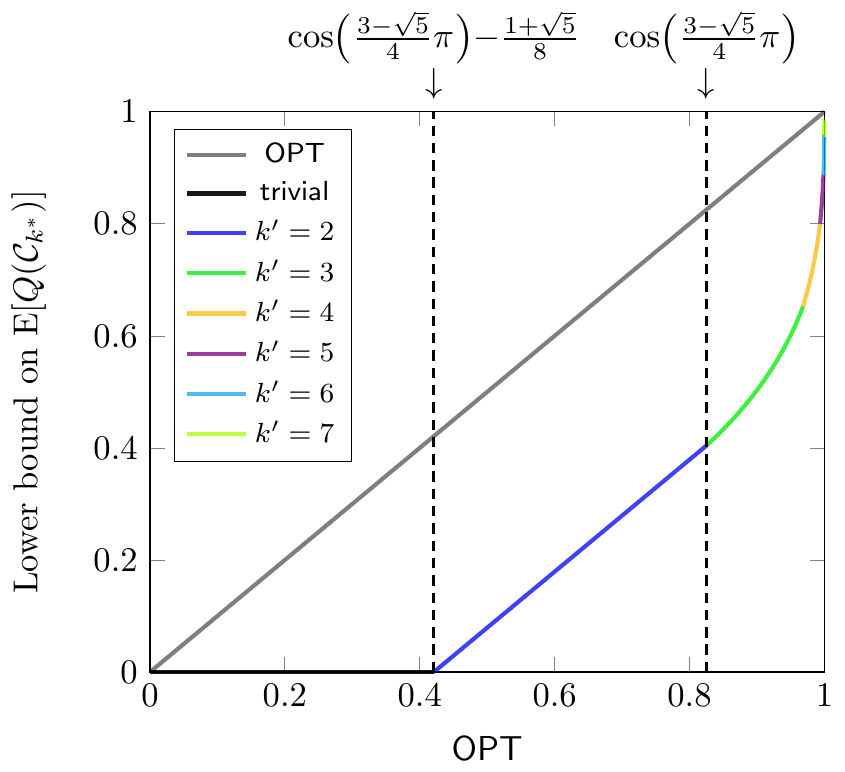}
\caption{A brief illustration of the lower bound on the expected modularity value 
of the output of \textsf{Hyperplane}($k^*$) with respect to the value of \OPT. 
For simplicity, we replace $q$ by its upper bound 1. 
Note that $k'\in \argmin_{k\in \mathbb{Z}_{>0}} g_k(\OPT)$.}\label{fig:lb}
\end{figure}

\begin{remark}
The additive approximation error of \textsf{Hyperplane}($k^*$) depends on the value of $q<1$. 
We see that the less the value of $q$, the better the additive approximation error. 
Thus, it is interesting to find some graphs that have a small value of $q$. 
For instance, for any regular graph $G$ that satisfies $m=\frac{\alpha}{2}n^2$, it holds that $q=1-\alpha$, where $\alpha$ is an arbitrary constant in $(0,1)$.
Here we prove the statement. 
Since $G$ is regular, we have $d_i=2m/n=\alpha n$ for any $i\in V$. 
Moreover, for any $\{i,j\}\in E$, it holds that $q_{ij}=\frac{A_{ij}}{2m}-\frac{d_id_j}{4m^2}=\frac{1}{\alpha n^2}-\frac{1}{n^2}>0$. 
Therefore, we have 
\begin{align*}
q=\sum_{(i,j)\in P_{\geq 0}} q_{ij}=2 \sum_{\{i,j\}\in E}q_{ij}=2m\left(\frac{1}{\alpha n^2}-\frac{1}{n^2}\right)=1-\alpha. 
\end{align*}
\end{remark}

\if 0
The additive approximation error of \textsf{Hyperplane}($k^*$) depends on the value of $q<1$. 
We see that the less the value of $q$, the better the additive approximation error. 

\begin{corollary}\label{cor:regular}
Let $\alpha$ be an arbitrary constant in the interval $(0,1)$. For any regular graph that satisfies $m=\frac{\alpha}{2}n^2$, it holds that 
\begin{align*}
\text{E}[Q(\mathcal{C}_{k^*})]\geq \OPT - (1-\alpha)\left(\cos\left(\frac{3-\sqrt{5}}{4}\pi\right)-\frac{1+\sqrt{5}}{8}\right). 
\end{align*}
In particular, if $\textsf{OPT}>\cos\left(\frac{3-\sqrt{5}}{4}\pi\right)$ holds, then 
\begin{align*}
\text{E}[Q(\mathcal{C}_{k^*})]\geq \OPT - (1-\alpha)\min_{k\in \mathbb{Z}_{>0}} g_k(\OPT). 
\end{align*}
\end{corollary}

\begin{proof}
It suffices to show that for any regular graph $G$ that satisfies $m=\frac{\alpha}{2}n^2$, we have $q= 1-\alpha$. 
Since $G$ is regular, we have $d_i=2m/n=\alpha n$ for any $i\in V$. Moreover, for any $\{i,j\}\in E$, it holds that $q_{ij}=\frac{A_{ij}}{2m}-\frac{d_id_j}{4m^2}=\frac{1}{\alpha n^2}-\frac{1}{n^2}>0$. Therefore, we have 
\begin{align*}
q=\sum_{(i,j)\in P_{\geq 0}} q_{ij}=2 \sum_{\{i,j\}\in E}q_{ij}=2m\left(\frac{1}{\alpha n^2}-\frac{1}{n^2}\right)=1-\alpha, 
\end{align*}
as desired. 
\end{proof}
\fi

\section{Maximum Modularity Cut}\label{sec:cut}
In this section, we propose a polynomial-time 0.16598-additive approximation algorithm for the maximum modularity cut problem.

\subsection{Algorithm}
In this subsection, we revisit the SDP relaxation for the maximum modularity cut problem, and then describe our algorithm. 
The maximum modularity cut problem can be formulated as follows:
\begin{alignat*}{4}
&\text{Maximize}   &\quad &\frac{1}{4m}\sum_{(i,j)\in P}\left(A_{ij}-\frac{d_id_j}{2m}\right) (y_iy_j+1)\\
&\text{subject to} &      &y_i\in \{-1,1\} \qquad \quad (\forall i\in V).
\end{alignat*}
We denote by $\OPTc$ the optimal value of this original problem. 
Note that for any instance, it holds that $\OPTc \in [0,1/2]$, as shown in DasGupta and Desai~\cite{DaDe13}.
We introduce the following semidefinite relaxation problem: 
\begin{alignat*}{3}
&\SDPc:&\ \ &\text{Maximize}   &\quad &\frac{1}{4m}\sum_{(i,j)\in P}\left(A_{ij}-\frac{d_id_j}{2m}\right) (x_{ij}+1)\\
&&&\text{subject to} &      & x_{ii}=1 \qquad \qquad (\forall i \in V), \\
&&&                  &      & X=(x_{ij})\in \mathcal{S}^n_+, 
\end{alignat*}
where recall that $\mathcal{S}^n_+$ represents the cone of $n\times n$ symmetric positive semidefinite matrices. 
Let $X^*=(x^*_{ij})$ be an optimal solution to $\SDPc$, 
which can be computed (with an arbitrarily small error) in time polynomial in $n$ and $m$. 
Note here that $x^*_{ij}$ may be negative for $(i,j)\in P$ with $i\neq j$, unlike \textsf{SDP} in the previous section. 
The objective function value of $X^*$ can be divided into the following two terms: 
\begin{align*}
z_+^*= \frac{1}{4m} \sum_{(i,j)\in P} A_{ij}(x^*_{ij}+1) 
\quad \text{and} \quad z^*_-= -\frac{1}{8m^2}\sum_{(i,j)\in P} d_id_j(x^*_{ij}+1).
\end{align*}

We generate a random hyperplane to separate the vectors corresponding to the optimal solution $X^*$, 
and then obtain a bipartition $\mathcal{C}=\{C_1,C_2\}$ of $V$. 
For reference, the procedure is described in Algorithm~\ref{alg:cut}. 
As mentioned above, this algorithm is identical to the subroutine of the hierarchical divisive heuristic 
for the modularity maximization problem, which was proposed by Agarwal and Kempe~\cite{AgKe08}. 

\begin{algorithm}[t]
\caption{\textsf{Modularity Cut}}
\label{alg:cut}
\begin{algorithmic}[1]
\REQUIRE Graph $G=(V,E)$
\ENSURE Bipartition $\mathcal{C}$ of $V$ 
\STATE Obtain an optimal solution $X^*=(x^*_{ij})$ to $\SDPc$
\STATE Generate a random hyperplane and obtain a bipartition $\mathcal{C}_\text{out}=\{C_1,C_2\}$ of $V$
\STATE \textbf{return} $\mathcal{C}_\text{out}$
\end{algorithmic}
\end{algorithm}

\subsection{Analysis}
In this subsection, we show that Algorithm~\ref{alg:cut} obtains a 0.16598-additive approximate solution for any instance. 
At the end of our analysis, we present a lower bound on the expected modularity value of the output of Algorithm~\ref{alg:cut} 
with respect to the value of $\OPTc$. 

We start with the following lemma. 
\begin{lemma}\label{lem:z_bound}
It holds that $1/2 \leq z^*_+ \leq 1$ and $-1 \leq z^*_- \leq -1/2$.
\end{lemma}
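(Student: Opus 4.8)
The plan is to bound the two pieces $z^*_+$ and $z^*_-$ separately, using the fact that any feasible solution $X^*=(x^*_{ij})$ of $\SDPc$ satisfies $x^*_{ii}=1$ together with the positive semidefiniteness constraint, which forces $x^*_{ij}\in[-1,1]$ for all $i,j\in V$. Consequently each factor $x^*_{ij}+1$ lies in the interval $[0,2]$. First I would split every sum over $P=V\times V$ into the diagonal terms $(i,i)$ and the off-diagonal terms $(i,j)$ with $i\neq j$; the diagonal contributes the extreme value because $x^*_{ii}+1=2$ there, while the off-diagonal terms can range over all of $[0,2]$.

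For the lower bound on $z^*_+$: since $A_{ij}\geq 0$ for every pair and $x^*_{ij}+1\geq 0$, the whole sum $\sum_{(i,j)\in P}A_{ij}(x^*_{ij}+1)$ is at least the contribution of the diagonal, which is $\sum_{i\in V}A_{ii}\cdot 2 = 0$ for a simple graph — so that only gives $z^*_+\geq 0$, which is too weak. Instead I would keep the diagonal out and observe that the minimum over off-diagonal terms is attained when $x^*_{ij}=-1$, giving $z^*_+\geq 0$; to get $z^*_+\geq 1/2$ I need a better idea. The right move is to note $\sum_{(i,j)\in P}A_{ij}=2m$ and that we can also compare $X^*$ against the trivial feasible point $X=J$ (all-ones matrix), or simply bound using $x^*_{ij}\ge -1$ more carefully: actually $z^*_+=\frac{1}{4m}\sum A_{ij}(x^*_{ij}+1)$; since each $x^*_{ij}+1\in[0,2]$ and $\sum A_{ij}=2m$, we get $z^*_+\in\left[0,\frac{1}{4m}\cdot 2m\cdot 2\right]=[0,1]$ directly for the upper bound, and for the lower bound $1/2$ I would invoke optimality: the optimal value $\SDPc\geq\OPTc\geq 0$, but more to the point I should use that replacing $X^*$ by $X^*$ restricted appropriately, or argue that $\sum_{(i,j)\in P:i\ne j}A_{ij}x^*_{ij}\geq -2m$ cannot actually be that negative at an optimum — hmm, that needs the objective. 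Cleanly: $z^*_++z^*_-=\SDPc\ge\OPTc\ge 0$, and since $z^*_-\le 0$ trivially, combined with $z^*_+\le 1$ we would then need $z^*_-\ge -1$; but to pin $z^*_+\ge 1/2$ specifically I expect the intended argument is that $x^*_{ij}+1\ge 0$ and for the diagonal $i=j$ we have $d_i d_j = d_i^2$ and $\sum_i d_i^2$ relates to $\sum A_{ij}$, so I would write $z^*_+\ge\frac{1}{4m}\sum_{i}A_{ii}\cdot 2 + \frac{1}{4m}\sum_{i\ne j}A_{ij}\cdot 0$... this is the subtle point. The actual key fact is surely that $z^*_+\geq\frac{1}{4m}\sum_{(i,j)\in P}A_{ij}\cdot 1$ would follow if $x^*_{ij}\ge 0$, which it is not. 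So the real argument must use $\SDPc\ge Q_{\mathrm{cut}}(\{V,\emptyset\})$-type bounds or the structure $z^*_+ - 1/2 = \frac{1}{4m}\sum A_{ij}x^*_{ij}\ge -\frac{1}{4m}\cdot 2m = -1/2$ giving $z^*_+\ge 0$ again — so I suspect the paper actually wants $z^*_+ \ge 1/2$ because $\sum_{(i,j)\in P}A_{ij}x^*_{ij} \ge 0$ at optimum (the SDP would not choose a negative-correlation solution that hurts the $A$-term without compensating gain), which can be argued by a swapping/averaging argument: flipping the sign of the vectors is not available, but considering the feasible solution $\frac{1}{2}(X^*+J)$ shows optimality forces $\sum A_{ij}x^*_{ij}\ge\sum A_{ij}\cdot\frac{x^*_{ij}+1}{2}$ comparisons.

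Given the uncertainty above, the structurally cleanest route I would actually take: for the upper bounds, $x^*_{ij}+1\le 2$ together with $\sum_{(i,j)\in P}A_{ij}=2m$ and $\sum_{(i,j)\in P}d_id_j=(2m)^2=4m^2$ give $z^*_+\le\frac{1}{4m}\cdot 2m\cdot 2=1$ and $z^*_-\ge-\frac{1}{8m^2}\cdot 4m^2\cdot 2=-1$ — wait, that's a lower bound on $z^*_-$, giving $z^*_-\ge -1$; combined with $z^*_-\le 0$ (each summand nonnegative) we still need $z^*_-\le -1/2$. For that, restrict to the diagonal: $z^*_- \le -\frac{1}{8m^2}\sum_{i\in V}d_i^2\cdot 2 = -\frac{1}{4m^2}\sum_i d_i^2 \le -\frac{1}{4m^2}\cdot\frac{(2m)^2}{n}$, which is only $\le -\frac{1}{n}$, too weak; so instead I use $x^*_{ij}+1\ge 0$ on off-diagonal AND the diagonal gives exactly $\sum_i d_i^2/(2m^2)\cdot\frac12$... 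The resolution must again be optimality. The honest plan: prove $1/2\le z^*_+\le 1$ via (a) $z^*_+\le 1$ from $x^*_{ij}\le1$ and (b) $z^*_+\ge1/2$ from $\sum_{(i,j)}A_{ij}x^*_{ij}\ge 0$, the latter because otherwise the feasible point $X'$ with $x'_{ij}=|x^*_{ij}|$-type modification — actually simplest is: $X^*$ optimal implies $\SDPc = z^*_+ + z^*_- \ge \SDPc$ evaluated at $X=I$, which is $z_+=\frac1{4m}\sum_i A_{ii}\cdot 2 + \frac1{4m}\sum_{i\ne j}A_{ij}=\frac{2m}{4m}=\frac12$ and $z_-=-\frac1{8m^2}(\sum_i d_i^2\cdot 2 + \sum_{i\ne j}d_id_j) = -\frac1{8m^2}(4m^2+\sum_i d_i^2)$... this gives a concrete comparison point. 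So the main obstacle — and the step I would spend real care on — is establishing the lower bound $z^*_+\ge 1/2$ and upper bound $z^*_-\le -1/2$; I expect these follow by comparing the optimal $X^*$ with the identity matrix $I$ (a feasible solution corresponding to putting every vertex in its own side, conceptually) and exploiting that the gain from the $A$-term and the loss from the $d_id_j$-term must each individually be controlled, or more precisely by a direct sign argument that $\sum_{(i,j)\in P}A_{ij}x^*_{ij}\ge 0$ and $\sum_{(i,j)\in P}d_id_j x^*_{ij}\ge 0$ at any optimum (the SDP has no incentive to make either sum negative). Once those two sign inequalities are in hand, $z^*_+\ge\frac1{4m}\sum A_{ij}=\frac12$ and $z^*_-\le-\frac1{8m^2}\sum d_id_j\cdot\frac{\text{(diag part)}}{}$ — I would finalize by carefully separating diagonal from off-diagonal in $\sum d_id_j(x^*_{ij}+1)$: the diagonal alone contributes $\frac1{8m^2}\sum_i d_i^2\cdot 2$, and adding the nonnegative off-diagonal terms only makes $z^*_-$ more negative, while the bound $z^*_-\le -1/2$ must come from $\sum_{(i,j)\in P}d_id_j x^*_{ij}\ge 0$ plus $\sum d_id_j = 4m^2$, yielding $z^*_-\le-\frac{4m^2}{8m^2}=-\frac12$. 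The upper bound $z^*_-\le -1/2$ and lower bound $z^*_+\ge1/2$ thus both reduce to the two nonnegativity-of-correlation-sum claims, which I would prove by noting that $q_{ij}^{A}:=A_{ij}\ge0$ and $q_{ij}^{d}:=d_id_j\ge0$ are entrywise nonnegative PSD-compatible weight matrices, so $\sum q_{ij}x^*_{ij} = \langle Q, X^*\rangle$ with $Q\succeq 0$ (rank-one-sum of nonnegative combinations) is $\ge 0$ whenever $X^*\succeq0$ — that is the clean finish, and I would present it that way.
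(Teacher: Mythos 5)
Your bounds $z^*_+\le 1$, $z^*_-\ge -1$, and (crucially) $z^*_-\le -1/2$ are correct and the last one coincides with the paper's key step: since $X^*\succeq 0$, the quadratic form $\sum_{(i,j)\in P}d_id_j x^*_{ij}=d^\top X^* d$ is nonnegative, and together with $\sum_{(i,j)\in P}d_id_j=4m^2$ this gives $z^*_-\le -\tfrac12$. That part is fine.

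The gap is in your final argument for $z^*_+\ge 1/2$. You assert that $\sum_{(i,j)\in P}A_{ij}x^*_{ij}=\langle A,X^*\rangle\ge 0$ because $A$ is ``entrywise nonnegative, PSD-compatible.'' This reasoning is invalid: entrywise nonnegativity of $A$ does not make $A$ positive semidefinite. The adjacency matrix of a simple graph has a zero diagonal, so unless $A=0$ it has at least one negative eigenvalue, and $\langle A,X^*\rangle$ can certainly be negative for $X^*\succeq 0$. Your characterization of $A$ as a ``rank-one-sum of nonnegative combinations'' is exactly what fails: $dd^\top$ is such a sum (a single rank-one PSD term), but $A$ is not. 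So the nonnegativity argument transfers to the degree term only, not to the adjacency term, and your derivation of $z^*_+\ge 1/2$ does not go through.

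The irony is that you already wrote down the missing ingredient midway: $z^*_+ + z^*_- = \SDPc \ge \OPTc \ge 0$, but you dismissed it because with only $z^*_-\le 0$ it yields the too-weak $z^*_+\ge 0$. What you overlooked is that you should plug in your \emph{sharper} bound $z^*_-\le -1/2$ at that point: from $z^*_+ + z^*_- \ge \OPTc \ge 0$ one gets $z^*_+ \ge \OPTc - z^*_- \ge 0 + 1/2 = 1/2$. That is precisely how the paper closes the argument. No nonnegativity of $\langle A,X^*\rangle$ is needed or true; the lower bound on $z^*_+$ is not a standalone fact about the $A$-term but is \emph{inherited} from the upper bound on $z^*_-$ via feasibility/optimality of $X^*$.
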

\begin{proof}
Clearly, $z_+^* \leq 1$ and $z_-^*\geq -1$.
Since $X^*$ is positive semidefinite, we have
\begin{align*}
z_-^*&= - \frac{1}{8m^2} \left(\sum_{(i,j)\in P} d_id_jx^*_{ij} + \sum_{(i,j)\in P} d_id_j \right)
\leq -\frac{1}{8m^2} \left(0 + 4m^2 \right) 
= -\frac{1}{2}. 
\end{align*}
Combining this with $z^*_+ + z^*_-\geq \OPTc \geq 0$, we get 
\begin{align*}
z^*_+\geq \OPTc -z^*_-\geq \OPTc +\frac{1}{2}\geq \frac{1}{2},  
\end{align*}
as desired.
\end{proof}

In Algorithm~\ref{alg:cut}, the probability that two vertices $i,j\in V$ are in the same cluster is given by $1-\arccos(x^*_{ij})/\pi$.
For simplicity, we define the following two functions 
\begin{align*}
p_+(x)=1-\frac{\arccos(x)}{\pi} \quad \text{and}\quad p_-(x)=-\left(1-\frac{\arccos(x)}{\pi}\right)
\end{align*}
for $x\in [-1,1]$ (rather than $x\in [0,1]$). 
Here we present the lower convex envelope of each of $p_+(x)$ and $p_-(x)$. 
\begin{lemma}\label{lem:lce_cut}
Let
\begin{align*}
	\alpha=\min_{-1<x<1} \frac{1-\frac{\arccos (x)}{\pi}}{\frac{x+1}{2}}
			\ (\simeq 0.8785672)  \quad \text{and}\quad 
			\beta = \argmin_{-1<x<1} \frac{1-\frac{\arccos (x)}{\pi}}{\frac{x+1}{2}}
			\ (\simeq 0.6891577).
\end{align*}
The lower convex envelope of $p_+(x)$ is given by  
\begin{align*}
\underline{p_+}(x)=
\begin{cases}
\displaystyle	\alpha \left(\frac{x+1}{2} \right) &(-1 \leq x \leq \beta), \\
\displaystyle	1- \frac{\arccos (x)}{\pi} &(\beta < x \leq 1),
\end{cases} 
\end{align*}
and the lower convex envelope of $p_-(x)$ is given by 
\begin{align*}
\underline{p_-}(x)=
\begin{cases}
\displaystyle	-\left( 1- \frac{\arccos (x)}{\pi} \right) &(-1 \leq x \leq -\beta), \\
\displaystyle	(\alpha-1)- \alpha \left(\frac{x+1}{2}\right) &(-\beta < x \leq 1). 
\end{cases}
\end{align*}
\end{lemma}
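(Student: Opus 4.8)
The plan is to establish the formula for $\underline{p_+}$ by a standard ``concave-then-convex'' argument, and then deduce the formula for $\underline{p_-}$ from it via an affine change of variables. The starting point is the shape of $p_+$: since $p_+'(x)=\frac{1}{\pi\sqrt{1-x^2}}$ and $p_+''(x)=\frac{x}{\pi(1-x^2)^{3/2}}$, the function $p_+$ is concave on $(-1,0]$ and convex on $[0,1)$, with $p_+(-1)=0$, $p_+(0)=\frac12$, $p_+(1)=1$. For a function of this shape, the lower convex envelope consists of a line segment issued from the left endpoint $(-1,0)$ that is tangent to the graph at some point, followed by the graph itself beyond that point; I would show the tangent point is exactly $x=\beta$ and the tangent line is $x\mapsto\alpha\frac{x+1}{2}$.

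Concretely, introduce the secant-slope function $\phi(x)=\dfrac{1-\arccos(x)/\pi}{(x+1)/2}$ on $(-1,1)$, so that $\alpha=\min\phi$ and $\beta=\argmin\phi$ by definition. First I would locate the minimizer: on $(-1,0]$ the concavity of $p_+$ together with $p_+(-1)=0$ forces $\phi$ to be non-increasing (a concave function vanishing at the left endpoint has non-increasing secant slope), so $\phi\ge\phi(0)=1$ there; on $[0,1]$ we have $\phi(0)=\phi(1)=1$ while $\phi'(0)<0$ because $p_+'(0)=\frac1\pi<\frac12=p_+(0)$, so $\phi$ dips below $1$ and attains its minimum at some interior point $\beta\in(0,1)$, where $\phi'(\beta)=0$. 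The stationarity condition rewrites as $p_+'(\beta)=\frac12\phi(\beta)=\frac{\alpha}{2}$, which says precisely that the line $x\mapsto\alpha\frac{x+1}{2}$ passes through $(\beta,p_+(\beta))$ (as $\alpha=\phi(\beta)$ forces $\alpha\frac{\beta+1}{2}=p_+(\beta)$) with the same slope as $p_+$ there, while $\phi\ge\alpha$ on all of $[-1,1]$ gives $p_+(x)\ge\alpha\frac{x+1}{2}$ everywhere. Then I would check that the piecewise function $\underline{p_+}$ in the statement is: (i) convex, being linear on $[-1,\beta]$, equal to the convex function $p_+$ on $[\beta,1]$ since $\beta>0$, and $C^1$ at $\beta$ by the slope matching just obtained; (ii) a minorant of $p_+$, by $p_+(x)\ge\alpha\frac{x+1}{2}$ on $[-1,\beta]$ and equality on $[\beta,1]$; and (iii) maximal among convex minorants, since for any convex $h\le p_+$ we have $h\le p_+=\underline{p_+}$ on $[\beta,1]$, while on $[-1,\beta]$ convexity of $h$ places it below the chord through $(-1,h(-1))$ and $(\beta,h(\beta))$, hence below the chord through $(-1,0)$ and $(\beta,\alpha\frac{\beta+1}{2})$, which is the line $\alpha\frac{x+1}{2}$. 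This yields the stated formula for $\underline{p_+}$.

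For $p_-$ I would use the identity $p_-(x)=p_+(-x)-1$, immediate from $\arccos(-x)=\pi-\arccos(x)$. Since the lower convex envelope is unchanged under an affine reparametrization of the domain and an affine shift of the range, $\underline{p_-}(x)=\underline{p_+}(-x)-1$. Substituting the two pieces of $\underline{p_+}$ (noting $-x\in[-1,\beta]\iff x\in[-\beta,1]$ and $-x\in[\beta,1]\iff x\in[-1,-\beta]$) and simplifying $\alpha\frac{1-x}{2}-1=(\alpha-1)-\alpha\frac{x+1}{2}$ gives exactly the claimed expression for $\underline{p_-}$.

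The one genuinely delicate point is the location of the minimizer $\beta$: one must confirm that $\argmin\phi$ lands in the region $[0,1)$ where $p_+$ is convex, since otherwise the ``line then curve'' description would be incorrect, and this rests on the elementary comparisons $p_+'(0)=1/\pi<1/2$ and the monotonicity of $\phi$ on the concave part. Everything afterwards — convexity, minorance, and maximality of the candidate, and the transfer to $p_-$ — is routine. Note that the explicit numerical values $\alpha\simeq0.8786$ and $\beta\simeq0.689$ are not used in the argument; they merely name $\min\phi$ and $\argmin\phi$.
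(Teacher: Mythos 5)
Your argument is correct. The paper states Lemma~\ref{lem:lce_cut} (like Lemma~\ref{lem:lce}) without proof, treating it as routine, so there is no authorial proof to compare against; your ``concave-then-convex'' analysis via the secant-slope function $\phi(x)=p_+(x)/\tfrac{x+1}{2}$ is a standard and correct way to derive it. The key checkpoints are all sound: $p_+''(x)=x/(\pi(1-x^2)^{3/2})$ gives the inflection at $0$; concavity of $p_+$ on $(-1,0]$ with $p_+(-1)=0$ makes $\phi$ non-increasing there with $\phi(0)=1$; $\phi'(0)=2(1/\pi-1/2)<0$ and $\phi(1^-)=1$ push the minimizer $\beta$ into $(0,1)$, where $\phi'(\beta)=0$ is exactly tangency of the chord $x\mapsto\alpha\tfrac{x+1}{2}$; $\phi\ge\alpha$ gives minorance; slope-matching gives convexity of the candidate; and the chord comparison gives maximality. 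The transfer to $p_-$ via $p_-(x)=p_+(-x)-1$ and the invariance of the lower convex envelope under the reparametrization $x\mapsto -x$ and the vertical shift $-1$ is also valid, and the algebra $\alpha\tfrac{1-x}{2}-1=(\alpha-1)-\alpha\tfrac{x+1}{2}$ checks out.
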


The following lemma lower bounds the expected modularity value of the output of Algorithm~\ref{alg:cut}. 

\begin{lemma}\label{lem:lb_cut}
Let $\mathcal{C}_\text{out}$ be the output of Algorithm~\ref{alg:cut}. 
It holds that 
\begin{align*}
\text{E}[Q(\mathcal{C}_\text{out})]\geq \underline{p_+} (2z^*_+-1)+ \underline{p_-}(-1-2z^*_{-}).
\end{align*}
\end{lemma}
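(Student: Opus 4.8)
The plan is to follow the template of the proof of Lemma~\ref{lem:lb}, but with the single-hyperplane probability and the lower convex envelopes provided by Lemma~\ref{lem:lce_cut}. First I would record that when one random hyperplane is generated, vertices $i$ and $j$ fall into the same part with probability $p_+(x^*_{ij})=1-\arccos(x^*_{ij})/\pi$, so that $\text{E}[y_iy_j]=2p_+(x^*_{ij})-1$ and hence $\text{E}[y_iy_j+1]=2p_+(x^*_{ij})$. Substituting this into the objective of the maximum modularity cut problem gives
\[
\text{E}[Q(\mathcal{C}_\text{out})]=\frac{1}{2m}\sum_{(i,j)\in P}\left(A_{ij}-\frac{d_id_j}{2m}\right)p_+(x^*_{ij}).
\]

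Next I would split this expression according to the sign of the coefficients. The terms involving $A_{ij}$ carry non-negative coefficients, while the terms involving $-d_id_j/(2m)$ carry non-positive coefficients; for the latter I would use the identity $p_-(x)=-p_+(x)$ to rewrite $\left(-\frac{d_id_j}{2m}\right)p_+(x^*_{ij})=\frac{d_id_j}{2m}\,p_-(x^*_{ij})$. This yields
\[
\text{E}[Q(\mathcal{C}_\text{out})]=\frac{1}{2m}\sum_{(i,j)\in P}A_{ij}\,p_+(x^*_{ij})+\frac{1}{4m^2}\sum_{(i,j)\in P}d_id_j\,p_-(x^*_{ij}).
\]

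Then I would apply Lemma~\ref{lem:lce_cut} to bound $p_+\geq\underline{p_+}$ and $p_-\geq\underline{p_-}$ pointwise on $[-1,1]$ (legitimate since $|x^*_{ij}|\le 1$ because $X^*$ is positive semidefinite with unit diagonal), and then invoke Jensen's inequality. For this I would observe that $\{A_{ij}/(2m)\}_{(i,j)\in P}$ and $\{d_id_j/(4m^2)\}_{(i,j)\in P}$ are each families of non-negative weights summing to $1$, since $\sum_{(i,j)\in P}A_{ij}=2m$ and $\sum_{(i,j)\in P}d_id_j=\bigl(\sum_{i\in V}d_i\bigr)^2=4m^2$. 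It then remains only to identify the arguments that come out of Jensen's inequality: from $z^*_+=\frac{1}{4m}\sum_{(i,j)\in P}A_{ij}(x^*_{ij}+1)$ we get $\frac{1}{2m}\sum_{(i,j)\in P}A_{ij}x^*_{ij}=2z^*_+-1$, and from $z^*_-=-\frac{1}{8m^2}\sum_{(i,j)\in P}d_id_j(x^*_{ij}+1)$ we get $\frac{1}{4m^2}\sum_{(i,j)\in P}d_id_j x^*_{ij}=-1-2z^*_-$. Combining the two lower bounds gives exactly $\text{E}[Q(\mathcal{C}_\text{out})]\geq\underline{p_+}(2z^*_+-1)+\underline{p_-}(-1-2z^*_-)$, as desired.

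The argument is essentially bookkeeping, so I do not anticipate a genuine obstacle. The only point requiring a little care is to check that the weighted averages $2z^*_+-1$ and $-1-2z^*_-$ indeed lie in the interval $[-1,1]$ on which $\underline{p_+}$ and $\underline{p_-}$ are defined; this follows from $|x^*_{ij}|\le 1$ together with convexity of the constraint, and is in any case sharpened by Lemma~\ref{lem:z_bound}, which places both quantities in $[0,1]$.
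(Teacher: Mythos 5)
Your proof is correct and follows essentially the same route as the paper: compute $\text{E}[Q(\mathcal{C}_\text{out})]$ exactly in terms of $p_+(x^*_{ij})$, split the sum into the $A_{ij}$ and $d_id_j$ parts, bound $p_\pm$ by their lower convex envelopes $\underline{p_\pm}$ from Lemma~\ref{lem:lce_cut}, and apply Jensen's inequality with the probability weights $A_{ij}/(2m)$ and $d_id_j/(4m^2)$. Your extra bookkeeping (verifying that the weights sum to $1$, and that the resulting arguments $2z^*_+-1$ and $-1-2z^*_-$ lie in $[0,1]$ via Lemma~\ref{lem:z_bound}) is a harmless elaboration of what the paper leaves implicit.
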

\begin{proof}
Recall that $\mathcal{C}_\text{out}(i)$ for each $i\in V$ denotes the (unique) cluster in $\mathcal{C}_\text{out}$ that includes the vertex $i$.
Note here that $\delta(\mathcal{C}_\text{out}(i),\mathcal{C}_\text{out}(j))$ for each $(i,j)\in P$ is a random variable, 
which takes 1 with probability $p_+(x^*_{ij})$ and 0 with probability $1-p_+(x^*_{ij})$. 
The expectation $\text{E}[Q(\mathcal{C}_\text{out})]$ is lower bounded as follows:
\begin{align*}
\text{E}[Q(\mathcal{C}_\text{out})]
&=\text{E}\left[\frac{1}{2m}\sum_{(i,j)\in P}\left(A_{ij}-\frac{d_id_j}{2m}\right)\delta(\mathcal{C}_\text{out}(i),\mathcal{C}_\text{out}(j))\right] \\
&=\text{E}\left[\frac{1}{2m}\sum_{(i,j)\in P} A_{ij}\delta(\mathcal{C}_\text{out}(i),\mathcal{C}_\text{out}(j)) 
-\frac{1}{4m^2}\sum_{(i,j)\in P}d_id_j\delta(\mathcal{C}_\text{out}(i),\mathcal{C}_\text{out}(j))\right] \\
&=\frac{1}{2m}\sum_{(i,j)\in P}A_{ij}p_+(x^*_{ij}) 
+ \frac{1}{4m^2}\sum_{(i,j)\in P}d_id_jp_-(x^*_{ij}) \\
&\geq \frac{1}{2m}\sum_{(i,j)\in P}A_{ij}\underline{p_+}(x^*_{ij}) 
+ \frac{1}{4m^2}\sum_{(i,j)\in P}d_id_j\underline{p_-}(x^*_{ij}) \\
&=\sum_{(i,j)\in P}\left(\frac{A_{ij}}{2m}\right)\underline{p_+}(x^*_{ij}) 
+ \sum_{(i,j)\in P}\left(\frac{d_id_j}{4m^2}\right)\underline{p_-}(x^*_{ij})\\
&\geq \underline{p_+}\left(\sum_{(i,j)\in P}\left(\frac{A_{ij}}{2m}\right)x^*_{ij}\right)
+ \underline{p_-}\left(\sum_{(i,j)\in P}\left(\frac{d_id_j}{4m^2}\right)x^*_{ij}\right)\\
&=\underline{p_+}(2z^*_+-1) + \underline{p_-}(-1-2z^*_-),
\end{align*}
where the last inequality follows from Jensen's inequality. 
\end{proof}

The following lemma provides an additive approximation error of Algorithm~\ref{alg:cut} 
by evaluating the above lower bound on $\text{E}[Q(\mathcal{C}_\text{out})]$ using the value of $\OPTc$. 
\begin{lemma}\label{lem:lb_opt_cut}
It holds that 
\begin{align*}
\text{E}[Q(\mathcal{C}_\text{out})]\geq \OPTc -\left(z^*_+ -\underline{p_+}(2z^*_+ -1) -\frac{\alpha -1}{2}\right).
\end{align*}
\end{lemma}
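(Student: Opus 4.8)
The plan is to feed the bound of Lemma~\ref{lem:lb_cut} through the bounds on $z^*_+$ and $z^*_-$ from Lemma~\ref{lem:z_bound}, together with the relaxation feasibility inequality $z^*_+ + z^*_- \geq \OPTc$. The key observation is that the argument of $\underline{p_-}$ appearing in Lemma~\ref{lem:lb_cut} always falls in the affine piece of the lower convex envelope, so that term becomes linear in $z^*_-$ and can be handled by elementary manipulation.

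First I would note that by Lemma~\ref{lem:z_bound} we have $z^*_- \in [-1,-1/2]$, hence $-1-2z^*_- \in [0,1]$; since $\beta \simeq 0.689 > 0$, this value lies in the interval $(-\beta,1]$ on which, by Lemma~\ref{lem:lce_cut}, $\underline{p_-}(x) = (\alpha-1) - \alpha\frac{x+1}{2}$. Substituting $x = -1-2z^*_-$ (so that $\frac{x+1}{2} = -z^*_-$) gives $\underline{p_-}(-1-2z^*_-) = (\alpha-1) + \alpha z^*_-$, so Lemma~\ref{lem:lb_cut} becomes
\[
\text{E}[Q(\mathcal{C}_\text{out})] \geq \underline{p_+}(2z^*_+-1) + (\alpha-1) + \alpha z^*_-.
\]

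It then remains to lower-bound $(\alpha-1) + \alpha z^*_-$ in terms of $\OPTc$ and $z^*_+$. I would split $\alpha z^*_- = z^*_- + (\alpha-1)z^*_-$, bound $z^*_- \geq \OPTc - z^*_+$ using feasibility, and bound $(\alpha-1)z^*_- = (1-\alpha)(-z^*_-) \geq (1-\alpha)/2$ using $z^*_- \leq -1/2$. Collecting the constants, $(\alpha-1) + (1-\alpha)/2 = (\alpha-1)/2$, so $(\alpha-1) + \alpha z^*_- \geq \OPTc - z^*_+ + \frac{\alpha-1}{2}$; substituting back yields precisely
\[
\text{E}[Q(\mathcal{C}_\text{out})] \geq \OPTc - \Bigl(z^*_+ - \underline{p_+}(2z^*_+-1) - \tfrac{\alpha-1}{2}\Bigr).
\]

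There is no genuine obstacle here; the one point to get right is confirming that $-1-2z^*_- \in (-\beta,1]$, so that the affine formula for $\underline{p_-}$ applies — which is exactly what the $z^*_- \leq -1/2$ half of Lemma~\ref{lem:z_bound} (a consequence of $X^* \succeq 0$) provides. Everything else is routine algebra.
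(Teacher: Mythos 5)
Your proof is correct and follows essentially the same route as the paper's: both use Lemma~\ref{lem:z_bound} to place $-1-2z^*_-$ in the affine regime of $\underline{p_-}$, substitute $\underline{p_-}(-1-2z^*_-)=(\alpha-1)+\alpha z^*_-$, and then combine the feasibility inequality $z^*_+ + z^*_- \geq \OPTc$ with the bound $z^*_- \leq -1/2$ applied to the $(\alpha-1)z^*_-$ term. Your "split $\alpha z^*_- = z^*_- + (\alpha-1)z^*_-$" is just a reordering of the paper's step of adding the nonpositive quantity $\OPTc - z^*_+ - z^*_-$ and then regrouping $-z^*_- + \alpha z^*_-$, so the two arguments are algebraically identical.
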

\begin{proof}
Since $-1 \leq z^*_- \leq -1/2$ by Lemma~\ref{lem:z_bound}, 
we have $0\leq -1-2z^*_{-} \leq 1$ and hence
\begin{align*}
\underline{p_-}(-1-2z^*_{-})
= (\alpha -1) -\alpha \left(\frac{(-1-2z^*_-)+1}{2}\right)
= (\alpha -1) +\alpha z^*_-.
\end{align*}
Thus, recalling $z^*_+ + z^*_-\geq \OPTc$, we obtain
\begin{align*}
\underline{p_+} (2z^*_+-1)+ \underline{p_-}(-1-2z^*_{-})
&\geq (\OPTc - z^*_+ - z^*_-) + \underline{p_+} (2z^*_+-1)+ \underline{p_-}(-1-2z^*_{-})\\
&= (\OPTc - z^*_+ - z^*_-) + \underline{p_+} (2z^*_+-1)+ (\alpha -1)+ \alpha z^*_{-} \\
&= \OPTc - z^*_+  + \underline{p_+} (2z^*_+-1)+ (\alpha -1)+ (\alpha-1) z^*_{-} \\
&\geq \OPTc - z^*_+  + \underline{p_+} (2z^*_+-1)+ (\alpha -1)+ (\alpha-1)(-1/2) \\
&= \OPTc - z^*_+  + \underline{p_+} (2z^*_+-1)+ (\alpha -1)/2. 
\end{align*}
Combining this with Lemma~\ref{lem:lb_cut}, we have
\begin{align*}
\text{E}[Q(\mathcal{C}_\text{out})] 
\geq \underline{p_+} (2z^*_+-1)+ \underline{p_-}(-1-2z^*_{-}) 
\geq \OPTc -\left(z^*_+ -\underline{p_+}(2z^*_+ -1) -\frac{\alpha -1}{2}\right),
\end{align*}
as desired. 
\end{proof}

For simplicity, we define the function 
\begin{align*}
g(x)= x -\underline{p_+} (2x -1) -\frac{\alpha -1}{2}
\end{align*}
for $x\in [1/2,1]$. 
Then, the inequality of the above lemma can be rewritten as 
\begin{align*}
\text{E}[Q(\mathcal{C}_\text{out})]\geq \OPTc -g(z^*_+).
\end{align*}

The following lemma analyzes the worst-case performance of Algorithm~\ref{alg:cut}; 
thus, it provides the additive approximation error of Algorithm~\ref{alg:cut}.
\begin{lemma}\label{lem:max_cut}
It holds that 
\begin{align*}
\max_{1/2\leq x\leq 1} g(x) = \frac{\sqrt{\pi^2-4}}{2\pi} + \frac{\arccos\left(\frac{\sqrt{\pi^2-4}}{\pi}\right)}{\pi}-\frac{\alpha}{2}\ (\simeq 0.165973). 
\end{align*}
\end{lemma}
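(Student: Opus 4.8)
The plan is to maximize the explicit one-variable function $g(x) = x - \underline{p_+}(2x-1) - (\alpha-1)/2$ over the interval $[1/2, 1]$ by a standard calculus argument, distinguishing the two pieces of the lower convex envelope $\underline{p_+}$ from Lemma~\ref{lem:lce_cut}. First I would substitute $t = 2x - 1$, so that $t$ ranges over $[0,1]$ and $x = (t+1)/2$; then $g$ becomes $(t+1)/2 - \underline{p_+}(t) - (\alpha-1)/2$. Since $\beta \simeq 0.689 > 0$, the breakpoint $t = \beta$ lies in the interior of $[0,1]$, so I would split into the linear regime $0 \le t \le \beta$, where $\underline{p_+}(t) = \alpha(t+1)/2$, and the curved regime $\beta < t \le 1$, where $\underline{p_+}(t) = 1 - \arccos(t)/\pi$.

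On the linear piece, $g$ (as a function of $t$) equals $(1-\alpha)(t+1)/2 - (\alpha-1)/2 = (1-\alpha)t/2 + (1-\alpha) \ge 0$ plus constants, which is \emph{increasing} in $t$ because $\alpha < 1$; hence on $[0,\beta]$ the maximum is attained at $t = \beta$, and there $g$ agrees with the value from the curved piece (the envelope is $C^1$ at the breakpoint, being a lower convex envelope that transitions from its supporting line to the function itself). On the curved piece, I would differentiate: with $x = (t+1)/2$ and $\underline{p_+}(t) = 1 - \arccos(t)/\pi$, we get $\frac{dg}{dt} = \frac12 - \frac{1}{\pi\sqrt{1-t^2}}$. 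Setting this to zero gives $\sqrt{1-t^2} = \frac{2}{\pi}$, i.e. $t^2 = 1 - \frac{4}{\pi^2} = \frac{\pi^2-4}{\pi^2}$, so the critical point is $t^\star = \frac{\sqrt{\pi^2-4}}{\pi}$; one checks $t^\star \simeq 0.771 \in (\beta, 1)$, and $\frac{d^2 g}{dt^2} = -\frac{t}{\pi(1-t^2)^{3/2}} < 0$ confirms it is a maximum. Plugging back, $g$ at this point equals $\frac{t^\star+1}{2} - 1 + \frac{\arccos(t^\star)}{\pi} - \frac{\alpha-1}{2} = \frac{t^\star}{2} + \frac{\arccos(t^\star)}{\pi} - \frac{\alpha}{2} = \frac{\sqrt{\pi^2-4}}{2\pi} + \frac{\arccos\!\left(\frac{\sqrt{\pi^2-4}}{\pi}\right)}{\pi} - \frac{\alpha}{2}$, which is the claimed value.

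Finally I would tie the two regimes together: since $g$ restricted to the linear piece is increasing up to its right endpoint $t=\beta$, and $g$ restricted to the curved piece has its unique interior maximum at $t^\star$ with $\beta < t^\star < 1$, the global maximum over $[0,1]$ (equivalently over $x \in [1/2,1]$) is $g(x^\star)$ with $x^\star = (t^\star+1)/2$, and this equals the stated quantity, numerically $\simeq 0.165973$. The main obstacle, and really the only non-mechanical point, is justifying that the envelope $\underline{p_+}$ is differentiable across the breakpoint $\beta$ so that the piecewise analysis glues correctly — but this follows from the defining property of $\beta$ in Lemma~\ref{lem:lce_cut} as the point where the supporting line through $(-1,0)$ is tangent to $p_+$, so the derivative of the linear piece, which is $\alpha/2$ in the $t$-variable, matches $\frac{1}{\pi\sqrt{1-\beta^2}} = \alpha/2$, i.e.\ $\sqrt{1-\beta^2} = \frac{2}{\alpha\pi}$; this consistency check also confirms $t^\star > \beta$ since the curved-piece derivative $\frac12 - \frac{1}{\pi\sqrt{1-t^2}}$ is still negative at $t=\beta$ (as $\alpha < 1$ forces $\frac12 < \frac{\alpha}{2} \cdot \frac{1}{\alpha} $... more directly $\frac{1}{\pi\sqrt{1-\beta^2}} = \frac{\alpha}{2} > \frac12$ fails, so I would instead verify the ordering numerically, $t^\star \simeq 0.771$ versus $\beta \simeq 0.689$).
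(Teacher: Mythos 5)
Your approach is essentially the same as the paper's: split $g$ into the linear regime ($1/2 \le x \le (\beta+1)/2$, i.e.\ $0 \le t \le \beta$ in your variable) and the curved regime, observe that $g$ increases through the linear piece, locate the unique interior critical point of the curved piece by calculus, and verify it lies to the right of the breakpoint. The paper compresses all of the curved-piece calculus into ``by simple calculation,'' so your write-up is in effect the same proof with the details filled in; the substitution $t = 2x-1$ is cosmetic.

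The one thing to fix is the sign in your final ``consistency check.'' You want the right derivative of $g$ (in $t$) at $t=\beta$ to be \emph{positive}, not negative: by the tangency characterization of $\beta$, $\frac{1}{\pi\sqrt{1-\beta^2}} = \alpha/2$, so $\left.\frac{dg}{dt}\right|_{t=\beta^+} = \frac{1}{2} - \frac{\alpha}{2} = \frac{1-\alpha}{2} > 0$ because $\alpha < 1$. Combined with $\frac{dg}{dt} \to -\infty$ as $t\to 1^-$ and the concavity of $g$ on the curved piece, this forces a unique interior maximizer in $(\beta,1)$, which is exactly $t^\star = \sqrt{\pi^2-4}/\pi$. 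You noticed something was off and retreated to the numerical check $t^\star \simeq 0.771 > \beta \simeq 0.689$, which is fine, but the analytic argument does go through once the sign is corrected, and it is cleaner than invoking numerics.
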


\begin{proof}
If $1/2\leq x \leq (\beta +1)/2 \ (\simeq 0.844579)$, then $0 \leq 2x-1 \leq \beta$ holds, and hence  
\begin{align*}
g(x)= x -\alpha \left(\frac{(2x-1)+1}{2}\right) -\frac{(\alpha -1)}{2}
= (1-\alpha )\left(x +\frac{1}{2}\right).
\end{align*}
Otherwise (i.e., $(\beta +1)/2 < x\leq 1$), it holds that $\beta < 2 x -1 \leq 1$, and thus 
\begin{align*}
g(x) &=x -\left(1 -\frac{\arccos (2x - 1)}{\pi}\right) -\frac{(\alpha -1)}{2}.
\end{align*}
Summarizing the above, we have 
\begin{align*}
g(x)= 
\begin{cases}
\displaystyle (1-\alpha )\left(x +\frac{1}{2}\right) 
&\displaystyle \left( \frac{1}{2}\leq x \leq \frac{\beta+1}{2}\ (\simeq 0.844579) \right),\\
\displaystyle x -\left(1 -\frac{\arccos(2x - 1)}{\pi}\right) -\frac{(\alpha -1)}{2}
&\displaystyle \left(\frac{\beta+1}{2} < x \leq 1 \right).
\end{cases}
\end{align*}

Figure~\ref{fig:error_cut} plots the above additive approximation error of Algorithm~\ref{alg:cut} with respect to the value of $z^*_+$. 
\begin{figure}[tb]
\centering
\hspace{15mm}
\includegraphics[width=10.8cm]{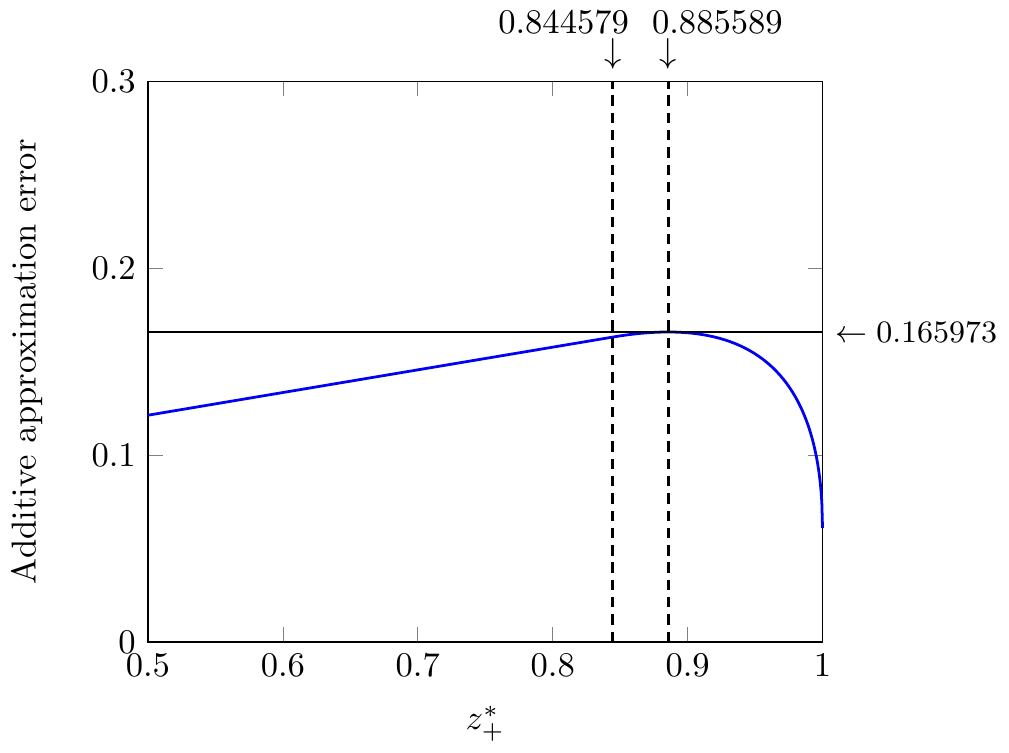}
\caption{An illustration of the additive approximation error of Algorithm~\ref{alg:cut} with respect to the value of $z^*_+$. 
Specifically, the function $g(x)=x-\underline{p_+}(2x-1)-(\alpha-1)/2$ for $x\in [1/2,1]$ is plotted. 
The maximum value of $g(x)$ is attained at $x= \frac{1}{2}+ \frac{\sqrt{\pi^2-4}}{2\pi}\ (\simeq 0.885589)$.
}\label{fig:error_cut}
\end{figure}
By simple calculation, we have 
\begin{align*}
\max_{1/2\leq x\leq 1} g(x) = \frac{\sqrt{\pi^2-4}}{2\pi} + \frac{\arccos\left(\frac{\sqrt{\pi^2-4}}{\pi}\right)}{\pi}-\frac{\alpha}{2},
\end{align*}
which is attained at $x= \frac{1}{2}+ \frac{\sqrt{\pi^2-4}}{2\pi}\ (\simeq 0.885589)$.
\end{proof}


Finally, we present a lower bound on $\text{E}[Q(\mathcal{C}_\text{out})]$ with respect to the value of $\OPTc$ (rather than $z^*_+$). 
Specifically, we have the following theorem. 
\begin{theorem}\label{thm:mod_cut}
It holds that 
\begin{align*}
\text{E}[Q(\mathcal{C}_\text{out})]> \OPTc -0.16598.
\end{align*}
In particular, if $\OPTc \geq \frac{\sqrt{\pi^2-4}}{2\pi}\ (\simeq 0.385589)$ holds, then
\begin{align*}
\text{E}[Q(\mathcal{C}_\text{out})]\geq \OPTc -g(\OPTc + 1/2).
\end{align*}
\end{theorem}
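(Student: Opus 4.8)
The plan is to follow exactly the template of the proof of Theorem~\ref{thm:mod}: for the first (unconditional) inequality, chain together the per-run bound of Lemma~\ref{lem:lb_opt_cut} with the worst-case estimate of Lemma~\ref{lem:max_cut}; for the refined conditional bound, combine the concavity/monotonicity of $g$ with the range of $z^*_+$ coming from Lemma~\ref{lem:z_bound}.

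For the statement $\text{E}[Q(\mathcal{C}_\text{out})] > \OPTc - 0.16598$, I would argue as follows. Lemma~\ref{lem:lb_opt_cut} gives $\text{E}[Q(\mathcal{C}_\text{out})] \geq \OPTc - g(z^*_+)$, and since $z^*_+ \in [1/2,1]$ by Lemma~\ref{lem:z_bound}, Lemma~\ref{lem:max_cut} yields $g(z^*_+) \leq \max_{1/2 \leq x \leq 1} g(x) = \frac{\sqrt{\pi^2-4}}{2\pi} + \frac{\arccos\left(\frac{\sqrt{\pi^2-4}}{\pi}\right)}{\pi} - \frac{\alpha}{2}$. It then remains to bound this closed-form quantity numerically, using a rigorous rational bound on $\alpha$ and on the $\arccos$ term, to conclude that it is at most $0.165973 < 0.16598$, which gives the desired strict inequality.

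For the refined bound under the hypothesis $\OPTc \geq \frac{\sqrt{\pi^2-4}}{2\pi}$, the key step is to show that $g$ is strictly decreasing on $[\OPTc + 1/2,\, 1]$. From the explicit piecewise formula for $g$ established inside the proof of Lemma~\ref{lem:max_cut}, on the first piece $[1/2, (\beta+1)/2]$ the function is affine with positive slope $1-\alpha > 0$, and on the second piece $((\beta+1)/2, 1]$ a direct computation gives $g''(x) = -\frac{4(2x-1)}{\pi\,(1-(2x-1)^2)^{3/2}} < 0$, so $g$ is strictly concave there; moreover its unique maximizer is $x^\star = \frac{1}{2} + \frac{\sqrt{\pi^2-4}}{2\pi}$. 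Hence $g$ increases on $[1/2, x^\star]$ and strictly decreases on $[x^\star, 1]$. The hypothesis $\OPTc \geq \frac{\sqrt{\pi^2-4}}{2\pi}$ is precisely $\OPTc + 1/2 \geq x^\star$, so $g$ is decreasing on $[\OPTc + 1/2,\, 1]$. By the proof of Lemma~\ref{lem:z_bound} we have $z^*_+ \geq \OPTc - z^*_- \geq \OPTc + 1/2$ and also $z^*_+ \leq 1$, so $z^*_+ \in [\OPTc + 1/2,\, 1]$, whence $g(z^*_+) \leq g(\OPTc + 1/2)$; plugging this into $\text{E}[Q(\mathcal{C}_\text{out})] \geq \OPTc - g(z^*_+)$ from Lemma~\ref{lem:lb_opt_cut} gives the claim.

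I expect the only delicate point to be the rigorous numerical verification that $\frac{\sqrt{\pi^2-4}}{2\pi} + \frac{\arccos\left(\frac{\sqrt{\pi^2-4}}{\pi}\right)}{\pi} - \frac{\alpha}{2} < 0.16598$, since $\alpha$ is itself defined through a transcendental minimization; this requires explicit enclosures of $\alpha$ and of the $\arccos$ value. Everything else is elementary calculus already essentially carried out in Lemmas~\ref{lem:z_bound}, \ref{lem:lb_opt_cut}, and \ref{lem:max_cut}.
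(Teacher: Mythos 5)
Your proposal is correct and follows essentially the same route as the paper's proof: combine Lemma~\ref{lem:lb_opt_cut} with the maximum from Lemma~\ref{lem:max_cut} for the unconditional bound, and use monotone decrease of $g$ on $[\tfrac{1}{2}+\tfrac{\sqrt{\pi^2-4}}{2\pi},1]$ together with $z^*_+\geq \OPTc+1/2$ (established in Lemma~\ref{lem:z_bound}'s proof) for the refined bound. The only difference is that you make explicit, via $g''$ on the second piece, what the paper merely asserts as ``clearly monotonically decreasing''; this is a harmless elaboration, not a different argument.
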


\begin{proof}
From Lemmas~\ref{lem:lb_opt_cut} and \ref{lem:max_cut}, it follows immediately that 
\begin{align*}
\text{E}[Q(\mathcal{C}_\text{out})]> \OPTc -0.16598.
\end{align*}

Here we prove the remaining part of the theorem. 
Assume that $\OPTc \geq \frac{\sqrt{\pi^2-4}}{2\pi}$ holds.
Clearly, the function $g(x)$ is monotonically decreasing over the interval $\left[\frac{1}{2}+\frac{\sqrt{\pi^2-4}}{2\pi},\, 1\right]$. 
Therefore, we have 
\begin{align*}
\text{E}[Q(\mathcal{C}_\text{out})]&\geq \OPTc -g(z^*_+)\geq \OPTc -g(\OPTc + 1/2), 
\end{align*}
where the second inequality follows from $z^*_+\geq \OPTc +1/2$. 
\end{proof}

Figure~\ref{fig:lb_cut} depicts the above lower bound on $\text{E}[Q(\mathcal{C}_\text{out})]$. 
As can be seen, if $\OPTc$ is close to $1/2$, then Algorithm~\ref{alg:cut} obtains a nearly-optimal solution. 

\begin{figure}[tb]
\centering
\includegraphics[width=9.3cm]{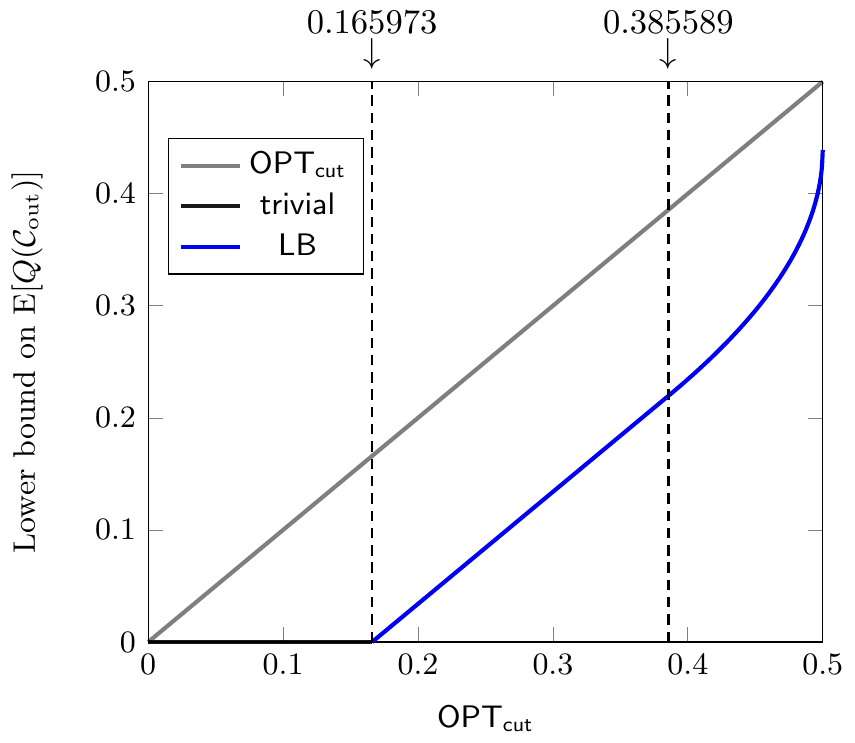}
\caption{An illustration of the lower bound on the expected modularity value of the output of Algorithm~\ref{alg:cut}.}\label{fig:lb_cut}
\end{figure}

\section{Related Problems}\label{sec:related}
In this section, we demonstrate that our additive approximation algorithm for the modularity maximization problem 
can be extended to some related problems.

\paragraph{Modularity maximization on edge-weighted graphs.}
First, we consider community detection in edge-weighted graphs. 
Let $G=(V,E,w)$ be a weighted undirected graph consisting of $n=|V|$ vertices, $m=|E|$ edges, and weight function $w:E\rightarrow \mathbb{R}_{>0}$. 
For simplicity, let $w_{ij}=w(i,j)$ and $W=\sum_{\{i,j\} \in E} w_{ij}$.
The weighted modularity, which was introduced by Newman~\cite{Ne04}, can be written as 
\begin{align*}
Q_w(\mathcal{C}) = \frac{1}{2W}\sum_{i\in V}\sum_{j\in V}\left(w_{ij} - \frac{s_is_j}{2W}\right)\delta(\mathcal{C}(i), \mathcal{C}(j)),
\end{align*}
where $s_i$ represents the weighted degree of $i\in V$ (i.e., $s_i=\sum_{\{i,j\}\in E} w_{ij}$). 
We consider the weighted modularity maximization problem: given a weighted undirected graph $G=(V,E,w)$, 
we are asked to find a partition $\mathcal{C}$ of $V$ that maximizes the weighted modularity. 
Since this problem is a generalization of the modularity maximization problem, it is also NP-hard.  

Our additive approximation algorithm for the modularity maximization problem can be generalized to the weighted modularity maximization problem. 
In fact, it suffices to set 
\begin{align*}
q_{ij}=\frac{w_{ij}}{2W}-\frac{s_is_j}{4W^2}\quad \text{for each } i,j\in V.  
\end{align*}
The analysis of the additive approximation error is similar; thus we have the following corollary. 
\begin{corollary}
There exists a polynomial-time $\left(\cos\left(\frac{3-\sqrt{5}}{4}\pi\right)-\frac{1+\sqrt{5}}{8}\right)$-additive approximation algorithm  for the weighted modularity maximization problem. 
\end{corollary}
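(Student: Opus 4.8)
The plan is to check that, once we set $q_{ij}=\frac{w_{ij}}{2W}-\frac{s_is_j}{4W^2}$ for $i,j\in V$, every structural fact used in Sections~\ref{sec:algorithm} and~\ref{sec:analysis} survives verbatim, so that running \textsf{Hyperplane}($k^*$) on the corresponding semidefinite relaxation yields the stated guarantee. First I would verify the elementary identities. Since $\sum_{i\in V}s_i=2W$ and $\sum_{(i,j)\in P}w_{ij}=2W$ (for a simple graph; self-loops only decrease the first sum), we get $\sum_{(i,j)\in P}q_{ij}=1-(2W)^2/(4W^2)=0$, so $P$ splits into $P_{\geq 0}$ and $P_{<0}$ with $\sum_{(i,j)\in P_{\geq 0}}q_{ij}=\sum_{(i,j)\in P_{<0}}(-q_{ij})=:q$, and $q\leq\sum_{(i,j)\in P,\ i\neq j}\frac{w_{ij}}{2W}=1$, with strict inequality since each $q_{ii}=-s_i^2/(4W^2)<0$ pushes positive mass into $P_{<0}$. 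I would also record $\OPT\in[0,1)$ for the weighted objective, exactly as before: $Q_w(\{V\})=0$, while for any $\mathcal{C}$ the within-community weight fraction is at most $1$ and the null-model term is nonnegative, so $Q_w(\mathcal{C})<1$.

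Next I would introduce the weighted analogue of \textsf{SDP}, namely maximize $\sum_{(i,j)\in P}q_{ij}x_{ij}$ subject to $x_{ii}=1$ for all $i$, $x_{ij}\geq 0$ for all $i,j$, and $X=(x_{ij})\in\mathcal{S}^n_+$. This is a valid relaxation of $\max\sum_{(i,j)\in P}q_{ij}(\bm{y}_i\cdot\bm{y}_j)$ over $\bm{y}_i\in\{\bm{e}_1,\dots,\bm{e}_n\}$ for precisely the reason given in Section~\ref{sec:algorithm} (an integral assignment makes $\bm{y}_i\cdot\bm{y}_j\in\{0,1\}\geq 0$), and it is solvable to arbitrary precision in time polynomial in $n$ and in the bit length of $w$. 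Defining $z^*_+$ and $z^*_-$ from an optimal solution $X^*$ exactly as in Section~\ref{sec:algorithm} and choosing $k^*\in\argmin_{k\in\{1,\dots,\max\{3,\lceil\log_2 n\rceil\}\}}g_k(z^*_+)$, I would then invoke Lemmas~\ref{lem:lce}, \ref{lem:lb}, \ref{lem:lb_opt}, \ref{lem:minmax}, and~\ref{lem:compromise} unchanged: their proofs use only that the $q_{ij}$ sum to zero, split into $P_{\geq 0}$ and $P_{<0}$, and that the relaxation is valid, so they transfer word for word with $Q$ replaced by $Q_w$. Repeating the computation in the proof of Theorem~\ref{thm:mod} then gives $\text{E}[Q_w(\mathcal{C}_{k^*})]\geq\OPT-q\bigl(\cos\bigl(\frac{3-\sqrt{5}}{4}\pi\bigr)-\frac{1+\sqrt{5}}{8}\bigr)$, and $q<1$ finishes the proof of the corollary; the sharper ``$\OPT$ close to $1$'' bound transfers in the same way.

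I do not expect a genuine obstacle; the content is bookkeeping, and the two points that deserve a line of explicit attention are (i) re-deriving $q<1$ and $\OPT<1$ with $W$ in place of $m$, sketched above, and (ii) the meaning of ``polynomial time'': the SDP solve and the hyperplane rounding run in time polynomial in $n$ and the encoding size of the weights, and the guarantee is on the expected weighted modularity; should a deterministic statement be preferred, one derandomizes the rounding by the method of conditional expectations, fixing one coordinate of each hyperplane sign vector at a time, exactly as for Goemans--Williamson. Accordingly I would present the proof as a short remark: list the three structural facts, observe that Lemmas~\ref{lem:lce}--\ref{lem:compromise} and Theorem~\ref{thm:mod} depend only on them, and conclude.
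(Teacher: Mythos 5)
Your proposal is correct and matches the paper's approach exactly: the paper itself proves the corollary in one line (``it suffices to set $q_{ij}=\frac{w_{ij}}{2W}-\frac{s_is_j}{4W^2}$; the analysis of the additive approximation error is similar''), and you are simply spelling out why the transfer works, namely that $\sum_{(i,j)\in P}q_{ij}=0$, $q<1$, $\OPT\in[0,1)$, and the SDP relaxation remains valid, after which Lemmas~\ref{lem:lce}--\ref{lem:minmax} and the first half of Theorem~\ref{thm:mod} apply verbatim. The only small loose end is your justification of the strict inequality $q<1$ (the phrase ``pushes positive mass into $P_{<0}$'' really establishes $q>0$ rather than $q<1$; the cleaner argument is that each off-diagonal $q_{ij}$ with $w_{ij}>0$ satisfies $q_{ij}<w_{ij}/2W$ because $s_is_j>0$), but this does not affect the validity of the proof.
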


\paragraph{Modularity maximization on directed graphs.}
Next we consider community detection in directed graphs. 
Let $G=(V,A)$ be a directed graph consisting of $n=|V|$ vertices and $m=|A|$ edges. 
The directed modularity, which was introduced by Leicht and Newman~\cite{LeNe08}, can be written as 
\begin{align*}
Q_d(\mathcal{C})=\frac{1}{m}\sum_{i\in V}\sum_{j\in V}\left(A_{ij}-\frac{d^\text{out}_id^\text{in}_j}{m}\right)\delta(\mathcal{C}(i),\mathcal{C}(j)), 
\end{align*}
where $A_{ij}$ is the $(i,j)$ component of the (directed) adjacency matrix $A$ of $G$, 
and $d^\text{in}_i$ and $d^\text{out}_i$, respectively, represent the in- and out-degree of $i\in V$. 
Note that there is no factor of $2$ in the denominators, unlike the (undirected) modularity. 
This is due to the directed counterpart of the null model used in the definition~\cite{LeNe08}. 
We consider the directed modularity maximization problem: given a directed graph $G=(V,A)$, 
we are asked to find a partition $\mathcal{C}$ of $V$ that maximizes the directed modularity. 
As mentioned in DasGupta and Desai~\cite{DaDe13}, this problem is also a generalization of the modularity maximization problem, and thus NP-hard. 

Our additive approximation algorithm for the modularity maximization problem can also be generalized to the directed modularity maximization problem. In fact, it suffices to set 
\begin{align*}
q_{ij}=\frac{A_{ij}}{m}-\frac{d^\text{out}_id^\text{in}_j}{m^2}\quad \text{for each } i,j\in V. 
\end{align*}
The analysis of the additive approximation error is also similar; thus we have the following corollary. 
\begin{corollary}
There exists a polynomial-time $\left(\cos\left(\frac{3-\sqrt{5}}{4}\pi\right)-\frac{1+\sqrt{5}}{8}\right)$-additive approximation algorithm  for the directed modularity maximization problem. 
\end{corollary}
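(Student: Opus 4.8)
The plan is to argue that the directed case is, after one cosmetic adjustment, literally an instance of the machinery already developed in Sections~\ref{sec:algorithm} and~\ref{sec:analysis}. Concretely, I would set $q_{ij}=\frac{A_{ij}}{m}-\frac{d^{\text{out}}_i d^{\text{in}}_j}{m^2}$ for each $(i,j)\in P$, observe that $Q_d(\mathcal{C})=\sum_{(i,j)\in P}q_{ij}\,\delta(\mathcal{C}(i),\mathcal{C}(j))=\sum_{(i,j)\in P}q_{ij}(\bm y_i\cdot\bm y_j)$ with $\bm y_i=\bm e_{\mathcal{C}(i)}$, so that the directed modularity maximization problem has exactly the same form as the formulation used to derive \textsf{SDP}. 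The algorithm is then \textsf{Hyperplane}($k^*$) run with these coefficients.

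The substantive verification is that the two structural properties of $(q_{ij})$ on which the whole analysis rests continue to hold. First, using $\sum_{(i,j)\in P}A_{ij}=m$ and $\sum_{i\in V}d^{\text{out}}_i=\sum_{j\in V}d^{\text{in}}_j=m$, one gets $\sum_{(i,j)\in P}d^{\text{out}}_i d^{\text{in}}_j=m^2$, hence $\sum_{(i,j)\in P}q_{ij}=1-1=0$; this legitimizes the definitions of $P_{\geq0}$, $P_{<0}$ and $q=\sum_{(i,j)\in P_{\geq0}}q_{ij}$, with $\sum_{(i,j)\in P_{\geq0}}q_{ij}=\sum_{(i,j)\in P_{<0}}(-q_{ij})=q$. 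Second, since $q_{ij}>0$ forces $A_{ij}\geq 1$ (the subtracted term is nonnegative), we have $q=\sum_{(i,j):q_{ij}>0}q_{ij}\leq\frac1m\sum_{(i,j)\in P}A_{ij}=1$, which is exactly the bound $q\leq 1$ needed to turn Theorem~\ref{thm:mod} into an additive error of at most $\cos(\frac{3-\sqrt5}{4}\pi)-\frac{1+\sqrt5}{8}$. A minor point is that $(q_{ij})$ is no longer symmetric; but $\delta(\mathcal{C}(i),\mathcal{C}(j))$ and every feasible $x_{ij}$ in \textsf{SDP} are symmetric in $i,j$, so replacing $q_{ij}$ by $\tfrac12(q_{ij}+q_{ji})$ changes neither $Q_d(\mathcal{C})$ nor the SDP, and symmetrization preserves both $\sum q_{ij}=0$ and $q\leq 1$.

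With this reduction in place, \textsf{SDP}, the probability formula $f_k(x^*_{ij})$, and the definitions of $z^*_+,z^*_-,f_k,h_k,g_k$ are the same objects as before, so Lemmas~\ref{lem:lce}--\ref{lem:compromise} and Theorem~\ref{thm:mod} apply verbatim, giving $\text{E}[Q_d(\mathcal{C}_{k^*})]\geq\OPT-q\bigl(\cos(\tfrac{3-\sqrt5}{4}\pi)-\tfrac{1+\sqrt5}{8}\bigr)$ with $q\leq 1$. Finally I would note that \textsf{SDP} can be solved (to arbitrarily small error) in time polynomial in $n$ and $m$, and that $k^*\leq\max\{3,\lceil\log_2 n\rceil\}=O(\log n)$ hyperplanes are generated, so the whole procedure runs in polynomial time; this establishes the corollary. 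The only step that is not a pure transcription of the undirected argument is the book-keeping around the asymmetry of $(q_{ij})$, which the symmetrization remark disposes of; I do not anticipate any genuine obstacle.
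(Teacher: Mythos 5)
Your proposal is correct and matches the paper's own (very terse) treatment: the paper likewise just substitutes $q_{ij}=\frac{A_{ij}}{m}-\frac{d^{\text{out}}_i d^{\text{in}}_j}{m^2}$ and declares the analysis ``similar.'' You fill in exactly the verifications the paper leaves implicit — that $\sum_{(i,j)\in P}q_{ij}=0$, that $q\le 1$, and that the asymmetry of $(q_{ij})$ is harmless because $\delta(\mathcal{C}(i),\mathcal{C}(j))$ and the SDP variables $x_{ij}$ are symmetric in $(i,j)$ (and symmetrizing $q_{ij}\mapsto\tfrac12(q_{ij}+q_{ji})$ only decreases $q$, since $\tilde q_{ij}^+\le\tfrac12(q_{ij}^++q_{ji}^+)$). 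One small remark: the corollary only needs the first inequality of Theorem~\ref{thm:mod}, which rests on Lemmas~\ref{lem:lb}--\ref{lem:minmax}; Lemma~\ref{lem:compromise} (and with it Fact~\ref{fact:OPT_ub}, whose directed analogue you do not verify) is only used for the secondary ``nearly optimal when $\OPT$ is large'' claim, so your argument is complete for what the corollary actually asserts.
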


\paragraph{Barber's bipartite modularity maximization.}
Finally, we consider community detection in bipartite graphs. 
Let $G=(V,E)$ be an undirected bipartite graph consisting of $n=|V|$ vertices and $m=|E|$ edges, 
where $V$ can be divided into $V_1$ and $V_2$ so that every edge in $E$ has one endpoint in $V_1$ and the other in $V_2$. 
Although the modularity is applicable to community detection in bipartite graphs, 
the null model used in the definition does not reflect the structural property of bipartite graphs. 
Thus, if we know that the input graphs are bipartite, 
the modularity is not an appropriate quality function. 

To overcome this concern, Barber~\cite{Ba07} introduced a variant of the modularity, which is called the bipartite modularity, 
for community detection in bipartite graphs. 
The bipartite modularity can be written as 
\begin{align*}
Q_b(\mathcal{C})=\frac{1}{m}\sum_{i\in V_1}\sum_{j\in V_2}\left(A_{ij}-\frac{d_id_j}{m}\right)\delta(\mathcal{C}(i),\mathcal{C}(j)). 
\end{align*}
Note again that there is no factor of $2$ in the denominators. 
This is due to the bipartite counterpart of the null model used in the definition~\cite{Ba07}. 
We consider Barber's bipartite modularity maximization problem: given an undirected bipartite graph $G=(V,E)$, 
we are asked to find a partition $\mathcal{C}$ of $V$ that maximizes the bipartite modularity. 
This problem is known to be NP-hard~\cite{MiSu15}.

Our additive approximation algorithm for the modularity maximization problem is applicable to Barber's bipartite modularity maximization problem. For each $i,j\in V$, we set 
\begin{align*}
q_{ij}=
\begin{cases}
\displaystyle \frac{A_{ij}}{m}-\frac{d_id_j}{m^2}  & \text{if } i\in V_1 \text{ and } j\in V_2,\\
0  &\text{otherwise}.
\end{cases}
\end{align*}
The analysis of the additive approximation error is again similar; thus we have the following corollary. 
\begin{corollary}
There exists a polynomial-time $\left(\cos\left(\frac{3-\sqrt{5}}{4}\pi\right)-\frac{1+\sqrt{5}}{8}\right)$-additive approximation algorithm  for Barber's bipartite modularity maximization problem. 
\end{corollary}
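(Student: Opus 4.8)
The plan is to transcribe, almost verbatim, the development of Sections~\ref{sec:algorithm} and \ref{sec:analysis}, once we check that Barber's bipartite modularity has the two structural properties on which that development rests. With $q_{ij}$ defined as in the statement, first rewrite $Q_b(\mathcal{C})=\sum_{(i,j)\in P}q_{ij}\,\delta(\mathcal{C}(i),\mathcal{C}(j))$ and verify: (i)~$\sum_{(i,j)\in P}q_{ij}=0$, so that the splitting $P=P_{\geq 0}\cup P_{<0}$ and the scalar $q:=\sum_{(i,j)\in P_{\geq 0}}q_{ij}=\sum_{(i,j)\in P_{<0}}(-q_{ij})$ make sense; and (ii)~$q<1$. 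Property~(i) is the only place the bipartite structure is used: every edge has exactly one endpoint in $V_1$ and one in $V_2$, so $\sum_{i\in V_1}d_i=\sum_{j\in V_2}d_j=m$ and $\sum_{i\in V_1}\sum_{j\in V_2}A_{ij}=m$; hence $\sum_{i\in V_1}\sum_{j\in V_2}\frac{A_{ij}}{m}=1$ and $\sum_{i\in V_1}\sum_{j\in V_2}\frac{d_id_j}{m^2}=\frac{1}{m^2}\big(\sum_{i\in V_1}d_i\big)\big(\sum_{j\in V_2}d_j\big)=1$, so $\sum_{(i,j)\in P}q_{ij}=0$. Property~(ii) holds because the only pairs with $q_{ij}>0$ are edges $\{i,j\}$ with $i\in V_1$, each contributing $\frac{1}{m}-\frac{d_id_j}{m^2}<\frac{1}{m}$, and there are $m$ edges. (The asymmetry of $q_{ij}$---it vanishes unless $i\in V_1$ and $j\in V_2$---is harmless, since the SDP objective only sees $q_{ij}$ paired with a symmetric $x_{ij}$, and the vanishing pairs carry weight zero.) Finally record $Q_b(\{V\})=0$ and $Q_b(\mathcal{C})\leq\frac{1}{m}\sum_{i\in V_1}\sum_{j\in V_2}A_{ij}=1$ with strict inequality, so the optimal value $\OPT_b$ lies in $[0,1)$, exactly as for $\OPT$.

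Next, set up the relaxation in the form of $\textsf{SDP}$ (constraints $x_{ii}=1$, $x_{ij}\geq 0$, $X\in\mathcal{S}^n_+$); it is valid because any integral solution has $\bm y_i\cdot\bm y_j=\delta(\mathcal{C}(i),\mathcal{C}(j))\in\{0,1\}\geq 0$, and positive semidefiniteness with $x_{ii}=1$ still forces $0\leq x^*_{ij}\leq 1$. Run \textsf{Hyperplane}($k$) unchanged; the probability that $i$ and $j$ land in the same cluster after $k$ random hyperplanes is still $f_k(x^*_{ij})$, since this depends only on the geometry of the Gram vectors, not on the objective. Define $z^*_+=\frac{1}{q}\sum_{(i,j)\in P_{\geq 0}}q_{ij}x^*_{ij}$ and $z^*_-=\frac{1}{q}\sum_{(i,j)\in P_{<0}}q_{ij}x^*_{ij}$ as before; using $q_{ij}\geq 0$ on $P_{\geq 0}$, $q_{ij}<0$ on $P_{<0}$, and $0\leq x^*_{ij}\leq 1$, one gets $z^*_+\in[0,1]$ and $z^*_-\in[-1,0]$.

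At this point the remaining lemmas apply with no change. Lemma~\ref{lem:lce} concerns the functions $f_k$ alone; the proof of Lemma~\ref{lem:lb} uses only $\sum q_{ij}=0$, the convex-envelope bounds, and Jensen's inequality, so it yields $\text{E}[Q_b(\mathcal{C}_k)]\geq q\big(f_k(z^*_+)+h_k(-z^*_-)\big)$; the proof of Lemma~\ref{lem:lb_opt} uses only SDP optimality ($q(z^*_+ + z^*_-)\geq\OPT_b$) and $z^*_-\leq 0$, giving $\text{E}[Q_b(\mathcal{C}_k)]\geq\OPT_b-q\,g_k(z^*_+)$; and Lemmas~\ref{lem:minmax} and \ref{lem:compromise} concern only the scalar functions $g_k$. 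Choosing $k^*\in\argmin_{k\in\{1,\dots,\max\{3,\lceil\log_2 n\rceil\}\}}g_k(z^*_+)$ and chaining exactly as in the proof of Theorem~\ref{thm:mod} gives $\text{E}[Q_b(\mathcal{C}_{k^*})]\geq\OPT_b-q\big(\cos(\tfrac{3-\sqrt5}{4}\pi)-\tfrac{1+\sqrt5}{8}\big)\geq\OPT_b-\big(\cos(\tfrac{3-\sqrt5}{4}\pi)-\tfrac{1+\sqrt5}{8}\big)$ since $q<1$; the refined bound for instances with $\OPT_b$ close to $1$ transfers in the same way. The SDP is solvable (up to arbitrarily small error) in time polynomial in $n$ and $m$, and at most $\max\{3,\lceil\log_2 n\rceil\}$ hyperplanes are generated, so the algorithm runs in polynomial time. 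The only step carrying genuine content beyond bookkeeping is Property~(i)---checking that the bipartite null model still gives $\sum_{(i,j)}q_{ij}=0$---so that is the one computation I would write out in full; everything else is a transcription of the earlier arguments.
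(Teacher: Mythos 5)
Your proposal is correct and is essentially the argument the paper is gesturing at when it says ``the analysis of the additive approximation error is again similar.'' You explicitly verify the two hypotheses that the machinery of Sections~2 and~3 actually depends on---$\sum_{(i,j)\in P}q_{ij}=0$ (using $\sum_{i\in V_1}d_i=\sum_{j\in V_2}d_j=\sum_{i\in V_1}\sum_{j\in V_2}A_{ij}=m$) and $q<1$---and correctly observe that nothing in Lemmas~\ref{lem:lb}--\ref{lem:compromise} or Theorem~\ref{thm:mod} requires the matrix $(q_{ij})$ to be symmetric; only a minor wording nit: it is the explicit constraint $x_{ij}\geq 0$ together with the PSD/diagonal constraints (not PSD alone) that gives $0\leq x^*_{ij}\leq 1$, which you in fact list.
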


\section{Conclusions}\label{sec:conclusions}
In this study, we have investigated the approximability of modularity maximization. 
Specifically, we have proposed a polynomial-time 
$\left(\cos\left(\frac{3-\sqrt{5}}{4}\pi\right)-\frac{1+\sqrt{5}}{8}\right)$-additive approximation algorithm for the modularity maximization problem. 
Note here that $\cos\left(\frac{3-\sqrt{5}}{4}\pi\right)-\frac{1+\sqrt{5}}{8}<0.42084$ holds; 
thus, this improves the current best additive approximation error of 0.4672, which was recently provided by Dinh, Li, and Thai~\cite{DiLiTh15}. 
Interestingly, our analysis has also demonstrated that 
the proposed algorithm obtains a nearly-optimal solution for any instance with a very high modularity value. 
Moreover, we have proposed a polynomial-time 0.16598-additive approximation algorithm for the maximum modularity cut problem. 
It should be noted that this is the first non-trivial approximability result for the problem. 
Finally, we have demonstrated that our additive approximation algorithm for the modularity maximization problem can be extended to some related problems.

There are several directions for future research. 
It is quite interesting to investigate additive approximation algorithms for the modularity maximization problem more deeply. 
For example, it is challenging to design an algorithm that has a better additive approximation error than that of \textsf{Hyperplane}($k^*$). 
As another approach, is it possible to improve the additive approximation error of \textsf{Hyperplane}($k^*$) by completely different analysis? 
Our analysis implies that if we lower bound the expectation $\text{E}[Q(\mathcal{C}_k)]$ by the form in Lemma~\ref{lem:lb_opt}, 
our additive approximation error of $\cos\left(\frac{3-\sqrt{5}}{4}\pi\right)-\frac{1+\sqrt{5}}{8}$ is the best possible. 
As another future direction, the inapproximability of the modularity maximization problem in terms of additive approximation should be investigated, 
as mentioned in Dinh, Li, and Thai~\cite{DiLiTh15}. 
Does there exist some constant $\epsilon>0$ such that computing an $\epsilon$-additive approximate solution for the modularity maximization problem is NP-hard?

\section*{Acknowledgments}
YK is supported by a Grant-in-Aid for Research Activity Start-up (No.~26887014).  
AM is supported by a Grant-in-Aid for JSPS Fellows (No.~26-11908).

\bibliographystyle{abbrv}
\bibliography{mod}

\appendix
\section{Proof of Lemma~\ref{lem:compromise}}\label{apx:compromise}
We start with the following fact. 
\begin{fact}\label{fact:f1_bound}
If $\OPT <\cos\left(\frac{3-\sqrt{5}}{4}\pi\right)$ holds, then $f_1(\OPT)<\frac{1+\sqrt{5}}{4}$. 
Conversely, if $\OPT \geq \cos\left(\frac{3-\sqrt{5}}{4}\pi\right)$ holds, then $f_1(\OPT)\geq \frac{1+\sqrt{5}}{4}$. 
\end{fact}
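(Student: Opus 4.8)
The plan is to reduce the entire statement to the monotonicity of $f_1$ together with a single exact evaluation at the threshold point $x_0 = \cos\left(\frac{3-\sqrt{5}}{4}\pi\right)$. First I would record the only structural fact needed: $f_1(x) = 1 - \arccos(x)/\pi$ is strictly increasing on $[0,1]$, since $\arccos$ is strictly decreasing on that interval. This is immediate and requires no computation.

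Next I would compute $f_1(x_0)$ in closed form. Since $\frac{3-\sqrt{5}}{4} \simeq 0.191 \in [0,1]$, the angle $\frac{3-\sqrt{5}}{4}\pi$ lies in $[0,\pi]$, which is exactly the principal range of $\arccos$; hence $\arccos\!\left(\cos\!\left(\frac{3-\sqrt{5}}{4}\pi\right)\right) = \frac{3-\sqrt{5}}{4}\pi$, and therefore
\begin{align*}
f_1(x_0) = 1 - \frac{3-\sqrt{5}}{4} = \frac{1+\sqrt{5}}{4}.
\end{align*}

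Finally I would combine the exact value with strict monotonicity to obtain both directions at once: if $\OPT < x_0$, then $f_1(\OPT) < f_1(x_0) = \frac{1+\sqrt{5}}{4}$; and if $\OPT \geq x_0$, then $f_1(\OPT) \geq f_1(x_0) = \frac{1+\sqrt{5}}{4}$. There is essentially no obstacle in this argument; the only point that warrants a moment's care is confirming that the argument of $\arccos \circ \cos$ falls on the principal branch, which follows instantly from the numerical bound $0 < \frac{3-\sqrt{5}}{4} < 1$.
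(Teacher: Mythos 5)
Your argument is correct and mirrors the paper's proof: the paper likewise relies on the monotonicity of $f_1$ (via the decrease of $\arccos$) and the exact evaluation $\arccos\bigl(\cos\bigl(\tfrac{3-\sqrt{5}}{4}\pi\bigr)\bigr)=\tfrac{3-\sqrt{5}}{4}\pi$ to get $f_1(x_0)=\tfrac{1+\sqrt{5}}{4}$. Your extra remark about the principal branch is a sensible justification for the step the paper leaves implicit.
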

\begin{proof}
Assume that $\OPT <\cos\left(\frac{3-\sqrt{5}}{4}\pi\right)$ holds. Then, we have  
\begin{align*}
f_1(\OPT)= 1-\frac{\arccos(\OPT)}{\pi}< 1-\frac{\arccos\left(\cos\left(\frac{3-\sqrt{5}}{4}\pi\right)\right)}{\pi} =\frac{1+\sqrt{5}}{4}. 
\end{align*}
The proof of the second statement is similar. 
\end{proof}

First, we consider the case where \(\OPT<\cos\left(\frac{3-\sqrt{5}}{4}\pi\right)\) holds.
Let us fix an arbitrary integer $k\geq 3$. 
We define the function \(l_k(x)=\frac{1}{2^{k-1}}\sum_{i=0}^{k-1}(2x)^i\) for $x\in [0,1]$. 
Then, we have
\begin{align*}
l_k(0)=0<0+\frac{1}{2}\ \  \text{and}\ \ 
l_k\left(\frac{1+\sqrt{5}}{4}\right)=\frac{1}{2^{k-1}}\frac{\left(\frac{1+\sqrt{5}}{2}\right)^k-1}{\frac{1+\sqrt{5}}{2}-1}
\leq \frac{1}{2^{3-1}}\frac{\left(\frac{1+\sqrt{5}}{2}\right)^3-1}{\frac{1+\sqrt{5}}{2}-1} =\frac{1+\sqrt{5}}{4}+\frac{1}{2}. 
\end{align*}
Moreover, the function $l_k(x)$ is convex. 
Thus, we have $l_k(x)<x+1/2$ for any $x\in \left[0, \frac{1+\sqrt{5}}{4}\right)$. 
Here we evaluate the value of 
\begin{align*}
g_k(\OPT)-g_2(\OPT) 
&=(\OPT -f_k(\OPT)+1/2^k) - (\OPT -f_2(\OPT)+1/4)\\
&=(f_2(\OPT)-1/4) - (f_k(\OPT)-1/2^k)\\
&=\left(f_1(\OPT)-1/2\right)\left(f_1(\OPT)+1/2\right) - \left(f_1(\OPT)-1/2\right)\cdot l_k(f_1(\OPT))\\
&=\left(f_1(\OPT)-1/2\right)\left(\left(f_1(\OPT)+1/2\right)-l_k(f_1(\OPT))\right).
\end{align*}
From the definition, \(f_1(\OPT)\geq 1/2\) holds. By Fact~\ref{fact:f1_bound}, we get $f_1(\OPT)+1/2> l_k(f_1(\OPT))$. 
Therefore, we have \(g_k(\OPT)\geq g_2(\OPT)\), 
which means that $\argmin_{k\in \mathbb{Z}_{>0}}g_k(\OPT)$ contains $1$ or $2$. 

Next, we consider the case where \(\OPT\ge \cos\left(\frac{3-\sqrt{5}}{4}\pi\right)\) holds.
It suffices to show that for any integer $k\geq \log_{2}n$,
\begin{align*}
g_{k+1}(\OPT)>g_k(\OPT). 
\end{align*}
In addition to Fact~\ref{fact:f1_bound}, we have the following two facts. 
\begin{fact}[see Lemma~2.1 in \cite{DaDe13}]\label{fact:OPT_ub}
It holds that $\OPT \leq 1-1/n$. 
\end{fact}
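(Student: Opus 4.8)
The plan is to prove the bound $Q(\mathcal{C})\le 1-1/n$ for \emph{every} partition $\mathcal{C}=\{C_1,\dots,C_k\}$ of $V$ and then maximize over $\mathcal{C}$, since $\OPT$ is by definition this maximum. Using the community form $Q(\mathcal{C})=\sum_{C\in\mathcal{C}}\bigl(m_C/m-(D_C/2m)^2\bigr)$, I would estimate the two sums separately.

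For the positive part, every edge of $G$ has both endpoints in at most one community, so $\sum_{C\in\mathcal{C}}m_C\le m$ and hence $\sum_{C\in\mathcal{C}}m_C/m\le 1$. For the negative part, set $x_C=D_C/(2m)$; since the handshake identity gives $\sum_{C\in\mathcal{C}}D_C=\sum_{i\in V}d_i=2m$, the numbers $x_C$ are nonnegative and sum to $1$, so by the Cauchy--Schwarz inequality (equivalently, convexity of $t\mapsto t^2$) we obtain $\sum_{C\in\mathcal{C}}x_C^2\ge\bigl(\sum_{C\in\mathcal{C}}x_C\bigr)^2/k=1/k$. Combining these,
\[
Q(\mathcal{C})=\sum_{C\in\mathcal{C}}\Bigl(\frac{m_C}{m}-x_C^2\Bigr)\le 1-\frac{1}{k}.
\]

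Finally, any partition of the $n$-element set $V$ into nonempty blocks has at most $n$ of them, i.e.\ $k\le n$, so $Q(\mathcal{C})\le 1-1/n$; taking the maximum over all partitions $\mathcal{C}$ yields $\OPT\le 1-1/n$, as claimed.

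I do not expect a genuine obstacle here: the only content is pairing the trivial edge-count bound with the Cauchy--Schwarz lower bound on $\sum_{C}(D_C/2m)^2$, followed by the elementary observation $k\le n$. (If one wished to avoid Cauchy--Schwarz, the same lower bound follows from $\sum_{C}x_C^2\ge(\max_C x_C)\sum_C x_C=\max_C x_C\ge 1/k$.) Since the statement is used only as an upper bound on $\OPT$, there is no need to track tightness of any of these inequalities.
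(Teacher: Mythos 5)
Your main argument is correct: $\sum_{C\in\mathcal{C}} m_C/m\le 1$, and with $x_C=D_C/(2m)$ the handshake lemma gives $\sum_C x_C=1$, so Cauchy--Schwarz (equivalently, convexity of $t\mapsto t^2$) yields $\sum_C x_C^2\ge 1/k$. Hence $Q(\mathcal{C})\le 1-1/k\le 1-1/n$ for every partition (empty blocks contribute $0$ and can be discarded, so $k\le n$), and maximizing over $\mathcal{C}$ gives $\OPT\le 1-1/n$. The paper itself does not prove this statement; it cites Lemma~2.1 of DasGupta and Desai, so there is no in-paper proof to compare against, but your route is the standard one.

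One detail to fix: the parenthetical alternative,
\[
\sum_C x_C^2 \ge \bigl(\max_C x_C\bigr)\sum_C x_C,
\]
has the inequality reversed. Termwise $x_C^2\le (\max_C x_C)\,x_C$, so in fact $\sum_C x_C^2\le (\max_C x_C)\sum_C x_C=\max_C x_C$; for instance $x=(0.9,0.1)$ gives $\sum_C x_C^2=0.82<0.9=\max_C x_C$. So that shortcut does not prove the required lower bound. This is only the parenthetical aside and does not affect the correctness of your Cauchy--Schwarz argument.
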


\begin{fact}\label{fact:acos_lb}
For any $x\in [0,1]$, it holds that $\sqrt{2x}\leq \arccos{(1-x)}$. 
\end{fact}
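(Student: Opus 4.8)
The plan is to reduce the claimed inequality to the elementary bound $\sin t \le t$ for $t \ge 0$ by a trigonometric substitution. For $x \in [0,1]$ I would put $\theta = \arccos(1-x)$; since $1-x$ ranges over $[0,1]$, $\theta$ ranges over $[0,\pi/2]$, and $x = 1-\cos\theta$. The target $\sqrt{2x} \le \arccos(1-x)$ then reads $\sqrt{2(1-\cos\theta)} \le \theta$. Applying the half-angle identity $1 - \cos\theta = 2\sin^2(\theta/2)$ and using that $\sin(\theta/2) \ge 0$ on the relevant range $\theta/2 \in [0,\pi/4]$, the left-hand side is exactly $2\sin(\theta/2)$, so the inequality becomes $\sin(\theta/2) \le \theta/2$ — the classical bound $\sin t \le t$ applied at $t = \theta/2 \ge 0$, which itself follows from $t - \sin t$ vanishing at $0$ and having nonnegative derivative $1 - \cos t$.

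Alternatively, and avoiding the substitution, I would prove the fact by monotonicity: set $\varphi(x) = \arccos(1-x) - \sqrt{2x}$ on $[0,1]$, note $\varphi(0) = 0$, and check that on $(0,1]$
\begin{align*}
\varphi'(x) = \frac{1}{\sqrt{1-(1-x)^2}} - \frac{1}{\sqrt{2x}} = \frac{1}{\sqrt{x}}\left(\frac{1}{\sqrt{2-x}} - \frac{1}{\sqrt{2}}\right) > 0
\end{align*}
because $2 - x < 2$; hence $\varphi$ is nondecreasing on $[0,1]$ and therefore nonnegative there, which is the claim.

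Neither route presents a genuine obstacle; the only thing to watch is the domain bookkeeping — keeping $\theta = \arccos(1-x)$ in $[0,\pi/2]$ so that $\sqrt{2(1-\cos\theta)}$ may be identified with $2\sin(\theta/2)$ without a sign ambiguity, and noting that $\varphi'$ is defined and positive throughout the open interval $(0,1]$ (it tends to $+\infty$ as $x \to 0^+$, which is harmless). I would use the first route, since it makes transparent why $\sqrt{2x}$ — rather than some other simple surrogate — is the natural lower bound for $\arccos(1-x)$.
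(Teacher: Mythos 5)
Both of your routes are correct. Your preferred first route and the paper's proof rest on the same kernel, Jordan's inequality $\sin t \le t$ on $[0,\pi/2]$, together with the substitution $\theta=\arccos(1-x)$, but the intermediate step differs: the paper integrates $\sin t\le t$ over $[0,y]$ to obtain $1-\cos y\le y^2/2$ and then substitutes, whereas you invoke the exact half-angle identity $1-\cos\theta=2\sin^2(\theta/2)$ and apply Jordan at $t=\theta/2$ directly. Your version is arguably slightly cleaner because it exposes the fact as Jordan's inequality in disguise (with equality exactly when $\sin(\theta/2)=\theta/2$, i.e.\ $\theta=0$), rather than as a consequence of integrating it; the paper's chain of equivalences, by contrast, derives the standard quadratic lower bound $\cos y\ge 1-y^2/2$ along the way, which is a reusable lemma. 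Your second route, the monotonicity argument via $\varphi(x)=\arccos(1-x)-\sqrt{2x}$ with $\varphi(0)=0$ and $\varphi'(x)=\frac{1}{\sqrt{x}}\left(\frac{1}{\sqrt{2-x}}-\frac{1}{\sqrt{2}}\right)>0$ on $(0,1]$, is a genuinely different and more mechanical calculus proof; the simplification $1-(1-x)^2=x(2-x)$ and the sign check are correct, and the singularity of $\varphi'$ at $x=0$ is harmless as you note. Either route serves; the half-angle route is closest in spirit to what the authors actually wrote.
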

\begin{proof}
By Jordan's inequality (i.e., $\sin t\leq t$ for any $t\in [0,\pi/2]$), we have 
\begin{alignat*}{3}
&                &\quad &\int_0^y\sin t\,dt\le \int_0^y t\,dt &\quad &(\forall y\in[0,\pi/2])\\
&\Leftrightarrow &      &1-\cos y\le \frac{y^2}{2}        &      &(\forall y\in[0,\pi/2])\\
&\Leftrightarrow &      &1-\frac{y^2}{2}\le \cos y        &      &(\forall y\in[0,\pi/2])\\
&\Leftrightarrow &      &1-\frac{\arccos(z)^2}{2}\le z    &      &(\forall z\in[0,1])\\
&\Leftrightarrow &      &2(1-z)\le \arccos(z)^2           &      &(\forall z\in[0,1])\\
&\Leftrightarrow &      &\sqrt{2x}\le \arccos(1-x)        &      &(\forall x\in[0,1]). 
\end{alignat*}
\end{proof}
Using the above facts, we prove the inequality.  
For any integer $k\geq \log_{2}n$, we have 
\begin{alignat*}{3}
&g_{k+1}(\OPT)-g_k(\OPT) &&=(\OPT-f_{k+1}(\OPT)+1/2^{k+1}) -(\OPT-f_k(\OPT)+&&1/2^k)\\
&&&= (1-f_1(\OPT))f_k(\OPT) -1/2^{k+1}\\
&&&= \frac{\arccos(\OPT)}{\pi} f_k(\OPT) -1/2^{k+1}\\
&&&\geq \frac{\arccos(1-1/n)}{\pi} f_k(\OPT) -1/2^{k+1}                            &&(\text{By Fact~\ref{fact:OPT_ub}})\\
&&&\geq \frac{\sqrt{2}}{\pi\sqrt{n}} f_k(\OPT)-1/2^{k+1}                           &&(\text{By Fact~\ref{fact:acos_lb}})\\
&&&=\frac{1}{2^{k+1}}\left(\frac{2\sqrt{2}}{\pi\sqrt{n}} (2\cdot f_1(\OPT))^k-1\right)\\
&&&\geq \frac{1}{2^{k+1}}\left(\frac{2\sqrt{2}}{\pi\sqrt{n}} \left(\frac{1+\sqrt{5}}{2}\right)^k-1\right)  &&(\text{By Fact~\ref{fact:f1_bound}})\\
&&&> \frac{1}{2^{k+1}}\left(\frac{2\sqrt{2}}{\pi\sqrt{n}} \left(2^{2/3}\right)^k-1\right)\\
&&&\geq \frac{1}{2^{k+1}}\left(\frac{2\sqrt{2}}{\pi}n^{1/6}-1\right)\\
&&&>0, 
\end{alignat*}
where the last inequality follows from $n\geq 2$ by the assumption that \(\OPT\ge \cos\left(\frac{3-\sqrt{5}}{4}\pi\right)\) holds. 
\qed

\end{document}